\theoremstyle{plain} 
\newtheorem{theorem}{Theorem}[section]
\newtheorem{lemma}[theorem]{Lemma}
\newtheorem{proposition}[theorem]{Proposition}
\theoremstyle{definition}
\newcommand{\thmlabel}[1]{\label{thm:#1}}
\newcommand{\thmref}[1]{Theorem~\ref{thm:#1}}
\newcommand{\threethmref}[3]{Theorems~\ref{thm:#1}, \ref{thm:#2} and \ref{thm:#3}}
\newcommand{\multithmref}[2]{Theorems~\ref{thm:#1}--\ref{thm:#2}}
\newcommand{\lemlabel}[1]{\label{lem:#1}}
\newcommand{\lemref}[1]{Lemma~\ref{lem:#1}}
\newcommand{\figlabel}[1]{\label{fig:#1}}
\newcommand{\figref}[1]{Figure~\ref{fig:#1}}
\newcommand{\Figref}[1]{Figure~\ref{fig:#1}}
\newcommand{\seclabel}[1]{\label{sec:#1}}
\newcommand{\secref}[1]{Section~\ref{sec:#1}}
\newcommand{\proplabel}[1]{\label{prop:#1}}
\newcommand{\propref}[1]{Proposition~\ref{prop:#1}}
\newcommand{\Figure}[4][htb]{
  \begin{figure}[#1]
    \vspace*{1ex}
    \begin{center}#3\end{center} 
    \vspace*{-1ex}
    \caption{\figlabel{#2}#4}
    \vspace*{1ex}
  \end{figure}
}
\newcommand{\third}{\ensuremath{\protect\tfrac{1}{3}}}
\newcommand{\half}{\ensuremath{\protect\tfrac{1}{2}}}
\DeclareMathOperator{\outdeg}{outdeg}
\DeclareMathOperator{\indeg}{indeg} 
\newcommand{\RNG}{{\sf RNG}}
\newcommand{\MST}{{\sf MST}} \newcommand{\LUNE}{{\sf lune}}
\newcommand{\DISC}{{\sf disc}} \newcommand{\CDISC}{{\overline{\DISC}}}
\newcommand{\LENS}{{\sf lens}} \newcommand{\CIRC}{{\sf circle}}
\newcommand{\RR}{\mathbb{R}^2}
\newcommand{\PP}{\ensuremath{\mathcal{P}}}
\renewcommand{\baselinestretch}{1.25}
\begin{document}

\title{Proximity Drawings of High-Degree Trees}

\thanks{The research of Ferran Hurtado is partially supported by
  projects MTM2009-07242 and Gen.\ Cat 2009SGR1040. The research of
  Giuseppe Liotta is supported by CNR and MURST. David Wood is
  supported by a QEII Research Fellowship from the Australian Research
  Council; research initiated at Universitat Polit{\`e}cnica de
  Catalunya, where supported by the Marie Curie Fellowship
  MEIF-CT-2006-023865, and by the projects MEC MTM2006-01267 and DURSI
  2005SGR00692.}

\author{Ferran Hurtado} \address{\newline Departament de
  Matem{\`a}tica Aplicada II\newline Universitat Polit{\`e}cnica de
  Catalunya\newline Barcelona, Spain}  \email{ferran.hurtado@upc.edu}
 
\author{Giuseppe Liotta} \address{\newline Dipartimento di
  Ingegneria Elettronica e dell'Informazione\newline Universit\`{a} di
  Perugia\newline Perugia, Italy} \email{liotta@diei.unipg.it}

\author{David R. Wood} \address{\newline Department of
  Mathematics and Statistics\newline The University of
  Melbourne\newline Melbourne, Australia} \email{woodd@unimelb.edu.au}

\keywords{graph, tree, proximity graph, minimum spanning tree,
  relative neighbourhood graph, thickness}

\begin{abstract}
  A drawing of a given (abstract) tree that is a minimum spanning tree
  of the vertex set is considered aesthetically pleasing. However,
  such a drawing can only exist if the tree has maximum degree at most
  6. What can be said for trees of higher degree? We approach this
  question by supposing that a partition or covering of the tree by
  subtrees of bounded degree is given. Then we show that if the
  partition or covering satisfies some natural properties, then there
  is a drawing of the entire tree such that each of the given subtrees
  is drawn as a minimum spanning tree of its vertex set.
\end{abstract}

\maketitle

\section{Introduction}
\seclabel{Intro}

The field of graph drawing studies aesthetically pleasing drawings of
graphs\footnote{We consider graphs $G$ that are simple and finite. Let
  $G$ be an (undirected) graph. The \emph{degree} of a vertex $v$ of
  $G$, denoted by $\deg_G(v)$, is the number of edges of $G$ incident
  with $v$. The minimum and maximum degrees of $G$ are respectively
  denoted by $\delta(G)$ and $\Delta(G)$. We say $G$ is
  \emph{degree-$d$} if $\Delta(G)\leq d$. Now let $G$ be a directed
  graph. Let $v$ be a vertex of $G$. The \emph{indegree} of $v$,
  denoted by $\indeg_G(v)$, is the number of incoming edges incident
  to $v$. The \emph{outdegree} of $v$, denoted by $\outdeg_G(v)$, is
  the number of outgoing edges incident to $v$. The maximum outdegree
  of  $G$ is denoted by $\Delta^+(G)$.  We say $G$ is
  \emph{outdegree-$d$} if $\Delta^+(G)\leq d$.}. There are a number of
recognised criteria for measuring the quality of a drawing of a given
graph. These include:
\begin{itemize}
\item no two edges should cross in drawings of planar graphs;
\item the edges should be drawn as straight line-segments; and
\item the drawing should have large \emph{angular resolution} (defined
  to be the minimum angle determined by two consecutive edges incident
  to a vertex).
\end{itemize}
These three criteria are adopted in the present paper. More formally,
a (\emph{straight-line general position}) \emph{drawing} of graph $G$
is an injective function $\phi:V(G)\rightarrow\mathbb{R}^2$ such that
the points $\phi(u),\phi(v),\phi(w)$ are not collinear for all
distinct vertices $u,v,w\in V(G)$. The \emph{image} of an edge $vw\in
E(G)$ under $\phi$ is the line segment
$\overline{\phi(v)\phi(w)}$. Where no confusion is caused, we
henceforth do not distinguish between a graph element and its image in
a drawing. Two edges \emph{cross} if they intersect at a point other
than a common endpoint. 

Our focus is on drawings of trees. Here a number of other criteria
have been studied that will not be considered in this paper. These
include: small bounding box area \citep{CDTT-SJC95, Chan-Algo02,
  Kim-IPL97, SKC-CGTA00, CGKT-CGTA02, GargRusu-JGAA04, CDP-CGTA92},
small aspect ratio \citep{GargRusu-JGAA04,CGKT-CGTA02}, few bends in
the edges \citep{Kim-DAM04}, few distinct edge-slopes
\citep{DESW-CGTA}, few distinct edge-lengths \citep{CDMW-EJC08},
layered vertices \citep{Suderman-IJCGA04}, upwardness in rooted trees
\citep{Trevisan-IPL96, CDP-CGTA92, Kim-DAM04, Chan-Algo02}, and
maximising symmetry \citep{HE-Algo03}.

A \emph{minimum spanning tree} of a finite set $P\subset\mathbb{R}^2$,
denoted by $\MST(P)$, is a straight-line drawing of a tree with vertex
set $P$ and with minimum total edge length; see \Figref{MST} for an
example. A drawing of a given (abstract) tree that is a minimum
spanning tree of its vertex set is considered to be particularly
aesthetically pleasing. In particular, every minimum spanning tree is
crossing-free and has angular resolution at least
$\frac{\pi}{3}$. Drawings defined in this way are called `proximity
drawings'; see \secref{TreeDrawing} and
\citep{ADH-Delauney,ADH-Gabriel,PV-CGTA04,MonmaSuri-DCG92,BLL-Algo96,DLW-JDA06,Liotta-ProximityDrawings}
for more on proximity drawings.

\Figure{MST}{\includegraphics{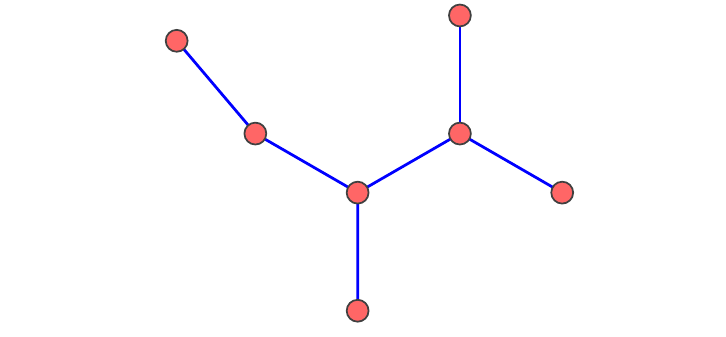}}{Example of a minimum spanning
  tree.}

\citet{MonmaSuri-DCG92} proved that every degree-$5$ tree can be drawn
as a minimum spanning tree of its vertex set, and they provided a
linear time (real RAM) algorithm to compute the drawing. In any
drawing of a vertex $v$ with degree at least $7$, some angle at $v$ is
greater than $\frac{\pi}{3}$, and the same is true for a degree-$6$
vertex if the points are required to be in general position. Thus a
tree that contains a vertex with degree at least $7$ cannot be drawn
as a minimum spanning tree, and the same is true for a degree-$6$
vertex if the points are in general position. If collinear vertices
are allowed, then \citet{EadesWhitesides-Algo96} showed that it is
NP-hard to decide whether a given degree-$6$ tree can be drawn as a
minimum spanning tree. In this sense, the problem of testing whether a
tree can be drawn as a minimum spanning tree is essentially
solved. (In related work, \citet{LM-CGTA03} characterised those trees
that have drawings that are Voronoi diagrams of their vertex set.)\

What can be said about drawings of a high degree tree $T$ that
`approximate' the minimum spanning tree of the vertex set? We prove
the following solutions to this question based on partitions of $T$
into subtrees of bounded degree. A \emph{partition} of a graph $G$ is
a set of subgraphs of $G$ such that every edge of $G$ is in exactly
one subgraph. A partition can also be thought of as a (non-proper)
edge-colouring, with one colour for each subgraph. We emphasise that
`trees' and `subtrees' are necessarily connected.

\begin{theorem}
  \thmlabel{Degree5Partition-MST} Let \PP\ be a partition of a tree
  $T$ into degree-$5$ subtrees. Then there is a drawing of $T$ such
  that each subtree in \PP\ is drawn as the minimum spanning tree of
  its vertex set.
\end{theorem}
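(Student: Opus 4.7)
The plan is to add the subtrees of $\PP$ to a growing drawing one at a time, realising each as an MST of its own vertex set via the Monma--Suri algorithm. The key observation is that for each $T_i\in\PP$ the MST property depends only on the positions of $V(T_i)$, so placing vertices of other subtrees elsewhere in the plane can never disturb a previously-drawn subtree.

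To set up the induction I would consider the auxiliary graph $H$ with node set $\PP$, where $T_1,T_2\in\PP$ are adjacent whenever $V(T_1)\cap V(T_2)\neq\emptyset$. Because $\PP$ partitions $E(T)$ into edge-disjoint connected subtrees of a tree, any two subtrees share at most one vertex (otherwise two internally disjoint $T$-paths between the shared vertices would form a cycle in $T$), and an analogous argument shows that $H$ is connected and acyclic, hence a tree. Root $H$ at an arbitrary $T_0$ and process the subtrees in BFS order: whenever $T'\neq T_0$ is about to be added, $V(T')$ meets the union of the already-drawn subtrees in exactly the single vertex $v$ shared with the parent of $T'$ in $H$.

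The drawing is constructed as follows. Draw $T_0$ as an MST of $V(T_0)$ using the Monma--Suri algorithm. For each subsequent $T'$ attaching at $v$, pick a disk $D$ centred at $v$ so small that $D$ contains no already-drawn vertex other than $v$ and that no already-drawn edge meets $D$ except in a short initial segment emanating from $v$. Then apply Monma--Suri to $T'$ rooted at $v$, using the flexibility of that construction to: (a) place the root at the existing location of $v$; (b) choose initial directions $\phi_1,\ldots,\phi_{d'}$ for the $d'\leq 5$ edges of $T'$ at $v$ with pairwise angular separation strictly greater than $\frac{\pi}{3}$ and different from the finitely many directions already occupied by edges at $v$ (possible because five such angles span only slightly more than $\frac{4\pi}{3}<2\pi$, leaving room to rotate the configuration); and (c) scale the resulting drawing uniformly about $v$ so that it lies entirely inside $D$ and is in general position with the existing drawing.

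Each $T_i$ drawn this way is an MST of $V(T_i)$ by Monma--Suri, and this property survives unchanged forever because no vertex of any later subtree ever belongs to $V(T_i)$. No edge crossings arise, since the new edges of $T'$ are confined to $D$ while every already-drawn edge either avoids $D$ entirely or leaves it through $v$. The only genuinely technical point, and the main obstacle in writing a rigorous proof, is verifying that the Monma--Suri algorithm really does admit the flexibility used in (a)--(c): scaling and prescribing the root are immediate from the scaling invariance of MSTs, while the freedom to choose the initial directions at the root (subject only to $>\frac{\pi}{3}$ separation) must be extracted from the recursive structure of their construction, where the children's subtrees are placed independently into wedges emanating from the root.
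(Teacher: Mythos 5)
Your overall strategy is sound and genuinely different from the paper's. You insert the parts of \PP\ one at a time, each drawn by Monma--Suri inside a small disc around its unique attachment vertex; since the minimum spanning tree property of $T_i$ depends only on the positions of $V(T_i)$, later insertions cannot disturb earlier parts. The paper instead works globally: it draws a complete $5$-ary tree $Q$ of height equal to the diameter of $T$ as a relative neighbourhood graph (\lemref{Degree5Tree-RNG}), chooses $\varepsilon$-discs around the vertices of $Q$ that are stable under perturbation, and defines a homomorphism $f$ from $T$ to $Q$ under which each part $T_i$ embeds injectively; placing each vertex $v$ of $T$ inside $\DISC(f(v),\varepsilon)$ then realises every $T_i$ as the relative neighbourhood graph (hence minimum spanning tree) of its vertex set simultaneously. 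The paper's route yields the stronger RNG statement (\thmref{Degree5Partition-RNG}) in one step, whereas yours needs only Monma--Suri as a black box together with the translation- and scaling-invariance of minimum spanning trees. In fact, since \thmref{Degree5Partition-MST} permits crossings, the ``flexibility'' you single out as the main technical obstacle in (b) is not needed at all here: any MST drawing of $T'$, translated so that $v$ lands at its prescribed position and scaled into $D$, already suffices. (Your non-crossing claim, by contrast, is incomplete as stated --- edges of $T'$ not incident to $v$ could still cross old edges that pass through $D$ at $v$, so confining $T'$ to $D$ is not enough; compare the lune-based induction of \lemref{Degree4Partition-RNG-proof}.)

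One concrete error: the auxiliary graph $H$ need not be acyclic. If $T$ is the star with centre $c$ and leaves $a,b,d$, and \PP\ consists of the three single-edge subtrees, then every two parts share $c$ and $H$ is a triangle. Fortunately the conclusion you extract from acyclicity is still true, for a different reason: order the parts so that every prefix has connected union (possible since $T$ is connected, e.g.\ by breadth-first search in $H$), and observe that the union $S$ of the already-drawn parts is a connected subtree of $T$ that is edge-disjoint from the next part $T'$; the intersection of two subtrees of a tree is again a (possibly empty) subtree, and being edgeless it is a single vertex. You should replace the $H$-is-a-tree argument with this observation. With that repair, and a final generic perturbation to guarantee general position of the whole point set (harmless, since the inequalities certifying each part as a minimum spanning tree can be taken strict), your proof of \thmref{Degree5Partition-MST} goes through.
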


The drawing of $T$ produced by \thmref{Degree5Partition-MST} possibly
has crossings, which are undesirable. The next result eliminates the
crossings, at the expense of a slightly stronger assumption about the
partition, which is expressed in terms of rooted trees. A \emph{rooted
  tree} is a directed tree such that exactly one vertex, called the
\emph{root}, has indegree $0$. It follows that every vertex except $r$
has indegree $1$, and every edge $vw$ of $T$ is oriented `away' from
$r$; that is, if $v$ is closer to $r$ than $w$, then $vw$ is directed
from $v$ to $w$. If $r$ is a vertex of a tree $T$, then the pair
$(T,r)$ denotes the rooted tree obtained by orienting every edge of
$T$ away from $r$.

\begin{theorem}
  \thmlabel{Degree4Partition-MST} Let \PP\ be a partition of a rooted
  tree $T$ into outdegree-$4$ subtrees. Then there is a non-crossing
  drawing of $T$ such that each subtree in \PP\ is drawn as the
  minimum spanning tree of its vertex set.
\end{theorem}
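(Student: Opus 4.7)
The plan is to process the subtrees of $\PP$ in BFS order along the ``meta-tree'' $\mathcal{M}$ whose nodes are the subtrees of $\PP$ and in which two subtrees are adjacent iff they share a vertex. Since $\PP$ partitions the edges of the rooted tree $T$, the graph $\mathcal{M}$ is itself a tree, naturally rooted at the subtree $T_0 \in \PP$ that contains the root of $T$. For every non-root $T_i$ of $\mathcal{M}$, its root $r_i$ is a non-root vertex of $T_i$'s parent in $\mathcal{M}$, and is therefore placed as soon as that parent subtree has been drawn. I will exploit the fact that the MST property of $T_i$ depends only on the positions of $V(T_i)$ and is invariant under rotation and uniform scaling centered at $r_i$: for each $T_i$ I would first fix, using the $\abs{\PP}=1$ case of \thmref{Degree5Partition-MST}, an MST ``template'' drawing of $T_i$ with $r_i$ at the origin, and then rotate and shrink this template when it is time to place $T_i$ in the global drawing.

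When it is $T_i$'s turn, let $E(r_i)$ be the finite set of directions occupied at $r_i$ by edges of the current partial drawing. I would pick a rotation angle $\alpha$ so that each of the up-to-$4$ root-edge directions of $T_i$ (after rotation by $\alpha$) avoids every direction in $E(r_i)$; since this excludes only finitely many values of $\alpha$, a good choice exists. Then I would pick a scale $\lambda$ small enough that the disc at $r_i$ of radius $\lambda$ times the template's diameter contains no previously placed vertex other than $r_i$ and meets no previously drawn edge that is not incident to $r_i$. The resulting scaled, rotated copy of the template is an MST drawing of $T_i$ contained in this disc, and its root-edge directions at $r_i$ are disjoint from $E(r_i)$.

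The hard part will be verifying that $T_i$'s drawing creates no crossings with previously drawn edges. The only previously drawn edges meeting the disc are those incident to $r_i$: each appears inside the disc as a radial segment from $r_i$ in some direction of $E(r_i)$. These segments cannot cross $T_i$'s root edges (they share the endpoint $r_i$ and point in different directions), but they might cross $T_i$'s sub-subtrees unless every sub-subtree is angularly separated from every direction in $E(r_i)$. I plan to handle this by using a strengthened form of \citet{MonmaSuri-DCG92} that produces the template in which the angular extent of each sub-subtree, measured from $r_i$, is an arbitrarily small neighborhood of the corresponding root-edge direction; this can be achieved by geometrically shrinking the recursion-level scales inside the template. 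A generic rotation $\alpha$ then keeps every sub-subtree clear of every direction in $E(r_i)$, which suffices for non-crossing; the rest of the argument is a straightforward induction along $\mathcal{M}$.
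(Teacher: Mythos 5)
Your route is genuinely different from the paper's. The paper proves \thmref{Degree4Partition-MST} by a single global induction on $\abs{V(T)}$ (\lemref{Degree4Partition-RNG-proof}): it draws the root $r$ of $T$, places the children of $r$ on a circular arc grouped into four discs, recurses on the components of $T-r$ inside lunes, and then verifies by a four-case analysis that every subtree of the partition is the relative neighbourhood graph of its vertex set; the outdegree-$4$ hypothesis is exactly what allows the up-to-four children of $r$ within one subtree to sit on a half-circle with pairwise angles exceeding $\frac{\pi}{3}$ while staying inside the lune. You instead exploit two facts the paper never isolates: the minimum-spanning-tree condition for $T_i$ depends only on the positions of $V(T_i)$, and it is invariant under similarities. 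This lets you place each subtree in one shot as a shrunken, rotated copy of a single-tree template, and reduces all inter-subtree interaction to the purely combinatorial problem of avoiding finitely many directions at each shared root. If the details are completed, your argument is more modular than the paper's, and notably it does not appear to use the outdegree-$4$ hypothesis at all --- it would upgrade \thmref{Degree5Partition-MST} to a non-crossing conclusion. That is either a genuine bonus or a reason to be suspicious; you should make this explicit and check it against the paper's decision to state the non-crossing result only under the stronger hypothesis.

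Two concrete things need repair before this is a proof. First, the ``meta-tree'' $\mathcal{M}$ is not in general a tree: if three or more subtrees of $\PP$ share a single vertex (e.g.\ a star whose edges are partitioned into several sub-stars), $\mathcal{M}$ contains a clique. What you actually need is only a processing order in which each subtree's root is placed before the subtree is drawn; ordering the subtrees by the depth in $T$ of their roots does this, and several subtrees rooted at the same vertex are simply handled consecutively, each avoiding the directions already consumed there. Second, and more seriously, essentially all of the geometric content of the theorem has been pushed into the asserted ``strengthened form of \citet{MonmaSuri-DCG92}'': an MST drawing of a rooted degree-$5$ tree in which each component of $T_i-r_i$ lies in an arbitrarily thin cone about its root-edge direction. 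This statement is true and is implicit in the inductive construction of \citet{BLL-Algo96} (each child subtree is confined to a disc of radius $\varepsilon$ about a child placed at distance $\delta'\gg\varepsilon$ from the root, so its angular extent from $r_i$ is $\OH{\varepsilon/\delta'}$), but it is not the statement of \lemref{Degree5Tree-RNG} and must be proved; doing so is comparable in length to the paper's own inductive lemma. Relatedly, your order of quantifiers is slightly off: you ``first fix a template, then rotate and shrink,'' but uniform scaling does not reduce angular extents, so the cone half-width of the template must be chosen \emph{after} the set $E(r_i)$ of forbidden directions is known (or at least smaller than $\pi/(2\abs{E(r_i)}\cdot 4)$ so that a good rotation exists by a measure count). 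Finally, a word is needed on keeping the union of all placed points in general position, which a small perturbation of each disc placement achieves.
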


By further restricting the partition we introduce large angular
resolution as an additional property of the drawing, again at the
expense of a slightly stronger assumption about the partition.

\begin{theorem}
  \thmlabel{Degree3Partition-MST} Let \PP\ be a partition of a rooted
  tree $T$ into outdegree-$3$ subtrees. Then there is a non-crossing
  drawing of $T$ with angular resolution at least \linebreak
  $\frac{\pi}{\max\{\Delta^+(T)-1,4\}}$ such that each subtree in
  \PP\ is drawn as the minimum spanning tree of its vertex set.
\end{theorem}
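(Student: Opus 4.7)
The plan is to extend the recursive wedge-based construction that (presumably) underlies \thmref{Degree4Partition-MST}, exploiting the extra angular slack gained by restricting to outdegree-$3$ subtrees (rather than outdegree-$4$) in order to enforce the resolution bound $\rho := \pi/\max\{\Delta^+(T)-1,4\}$ at every vertex of $T$. Root the tree $T$ at its given root $r$. Each subtree $T_i \in \PP$ has a natural root $r_i$ (its vertex closest to $r$), and unless $T_i$ contains $r$, the vertex $r_i$ also appears as a non-root vertex of a unique parent subtree $T_{\pi(i)}$; this yields a tree-of-subtrees hierarchy, which I would traverse top-down. The recursion maintains the invariant that when $T_i$ is about to be drawn, $r_i$ has already been placed and a wedge $W_i$ with apex $r_i$ (of chosen axis-direction and opening angle) has been allocated to $T_i$ by the parent. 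Inside $W_i$, I draw $T_i$ as an MST of its vertex set, while reserving, at each non-root vertex $v\in V(T_i)$, non-empty sub-wedges for the child subtrees rooted at $v$.

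The crux of the proof is the angular bookkeeping at each shared vertex. At $v$ with total outdegree $d:=\outdeg_T(v)\le\Delta^+(T)$, I would commit once and for all to the $d$ outward ray-directions at $v$ (one per outgoing edge of $T$) so that (i) together with the incoming direction if $v\ne r$ they are consecutively angularly separated by at least $\rho$, and (ii) for every subtree $T_j$ containing $v$, the $T_j$-edges at $v$ are, in their restricted angular order around $v$, consecutively separated by at least $\pi/3$, as required for the Monma--Suri condition that $T_j$ be an MST of its vertex set. A short calculation shows $(d+1)\rho\le 2\pi$, so (i) is feasible. Condition (ii) is where the outdegree-$3$ hypothesis enters: each incident subtree $T_j$ contributes at most $4$ edges at $v$ (one incoming plus at most three outgoing), whose Monma--Suri constraint consumes only $3\cdot\pi/3=\pi$ of the $2\pi$ budget at $v$; the remaining $\pi$ of slack is exactly what is needed to host the interleaved rays of the other subtrees at $v$ while still leaving $\ge\rho$ between globally consecutive rays. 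The chosen rays then define a wedge $W_j$ for each child subtree rooted at $v$, scaled small enough (using the scale-invariance of the MST property) so that distinct $W_j$'s are geometrically disjoint, which in turn guarantees planarity.

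The principal obstacle is the simultaneous satisfaction of (i), (ii), and the non-crossing requirement at every shared vertex. I would resolve this by formulating the direction-assignment step as a bin-packing problem in angular order: place $\lfloor 2\pi/\rho\rfloor$ angular bins of width $\rho$ around $v$, assign the $d$ outgoing directions to $d$ distinct bins, and permute this assignment across the subtrees meeting at $v$ so that any two consecutive $T_j$-rays are separated by at least $\lceil(\pi/3)/\rho\rceil$ bins. Replacing outdegree-$4$ by outdegree-$3$ increases the intra-subtree slack per vertex from $2\pi-4\cdot\pi/3=2\pi/3$ to $2\pi-3\cdot\pi/3=\pi$, which is precisely the margin needed to accommodate the extra interleavings forced by $\rho\le\pi/4$. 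Verifying that this bin-packing step is always feasible under the outdegree-$3$ hypothesis, and that the resulting geometric regions $W_j$ can be constructed disjointly, is the technical core; the theorem will then follow by induction on $|\PP|$, with the inductive hypothesis carrying the placement of the current subtree's root together with the axis-direction and opening angle of its allotted wedge.
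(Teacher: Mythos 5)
Your high-level plan (recursive placement in shrinking regions, with angular bookkeeping at shared vertices) is in the same family as the paper's proof, but an essential ingredient is missing and one structural choice is infeasible as stated. The missing ingredient: you treat the condition ``the edges of $T_j$ at a shared vertex are pairwise at least $\pi/3$ apart'' as though, together with disjointness of the recursive regions, it yields the conclusion that $T_j$ is a minimum spanning tree of its vertex set. That condition is necessary but far from sufficient. Being an MST (the paper in fact proves the stronger statement that each $T_j$ is the relative neighbourhood graph of its vertex set) is a global proximity condition: for every non-adjacent pair $v,w$ of $T_j$ some vertex of $T_j$ must lie in $\LENS(v,w)$, and for every edge $vw$ of $T_j$ no vertex may. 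Verifying this for all pairs --- including pairs lying deep in different branches --- is the bulk of the paper's argument (properties (a)--(f) and Cases 1--4 of \lemref{Degree3Partition-RNG-proof}), and it requires an extra inductive invariant (the root of each recursively drawn piece lies in $\LENS(x,p)$ for every other vertex $x$ of that piece) that your wedge invariant does not carry. Your proposal never engages with this.

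The infeasible structural choice is ``allocate one wedge $W_i$ with apex $r_i$ to each subtree $T_i$ and draw $T_i$ inside it.'' A subtree with three outgoing edges at its root needs those edges pairwise more than $\pi/3$ apart, so any wedge containing them has opening angle more than $2\pi/3$; a vertex of large outdegree can be the root of arbitrarily many outdegree-$3$ subtrees of the partition, and such wedges cannot be pairwise disjoint. The children of a shared vertex that belong to the same subtree must therefore be interleaved with those of the other subtrees --- which is precisely the step you defer to an unverified ``bin-packing feasibility'' claim and yourself identify as the technical core. The paper resolves it with an explicit ordering: the $d$ children of $r$ are placed on an arc of angle just over $\pi$ with consecutive gaps greater than $\pi/(d-1)$, ordered (via the $X,Y,Z',Z''$ interleaving) so that any two children in the same subtree are at least $(d-1)/3$ positions apart, hence more than $\pi/3$ apart as seen from $r$; the recursion then proceeds on individual children of $r$, each confined to a tiny disc, not on subtrees of the partition. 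Without this ordering and without the lens verification, the proposal does not amount to a proof.
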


Since every drawing of $T$ has angular resolution at most
$\frac{2\pi}{\Delta(T)}$, the bound on the angular resolution in
\thmref{Degree3Partition-MST} is within a constant factor of optimal.



Our final drawing theorem concerns a given covering of a tree by two
bounded degree subtrees. A \emph{covering} of a graph $G$ is a set of
connected subgraphs of $G$ such that every edge of $G$ is in at least
one subgraph.

\begin{theorem}
  \thmlabel{TwoCovering-MST} Let $\{T_1,T_2\}$ be a covering of a tree
  $T$ by two degree-$5$ subtrees. Then there is a non-crossing drawing
  of $T$ such that each $T_i$ is drawn as a minimum spanning tree of
  its vertex set.
\end{theorem}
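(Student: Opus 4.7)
The plan is to reduce \thmref{TwoCovering-MST} to \thmref{Degree4Partition-MST} by converting the covering $\{T_1,T_2\}$ into an appropriate partition of $T$, and then arguing geometrically that $T_2$ inherits the MST property in the resulting drawing even though $T_2$ is not itself one of the partition classes.

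I would first set $S=T_1\cap T_2$; since $T_1$ and $T_2$ are connected subtrees whose union is the connected tree $T$, $S$ is a nonempty subtree. Take the partition $\PP=\{T_1\}\cup\{C_1,\dots,C_m\}$, where $C_1,\dots,C_m$ are the nontrivial connected components of $T-E(T_1)$. Because $T$ is a tree, each $C_j$ is a subtree of $T_2$ meeting $V(T_1)$ at a unique vertex $\alpha_j\in V(S)$. Rooting $T$ at a leaf $r$ of $T_1$ makes $T_1$ outdegree-$4$ (its root has outdegree $1$ and every other vertex $v$ has outdegree $\deg_{T_1}(v)-1\le 4$), and each $C_j$ is rooted at $\alpha_j$ with outdegree $\deg_{T_2}(\alpha_j)-\deg_S(\alpha_j)\le 5-1=4$, valid provided $S$ contains at least one edge. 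Invoking \thmref{Degree4Partition-MST} would then yield a non-crossing drawing of $T$ in which $T_1$ and each $C_j$ is drawn as the MST of its own vertex set.

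To conclude that $T_2$ is also an MST of $V(T_2)$, I would first observe that $S$ is an MST of $V(S)$: since $S$ is a subtree of $T_1$ and $T_1$ is an MST of $V(T_1)\supseteq V(S)$, the MST inequalities for $T_1$ restrict to the analogous inequalities for $S$ on $V(S)$. For a general pair $x,y\in V(T_2)$, the $T_2$-path from $x$ to $y$ visits at most two anchor vertices; edges of this path lying inside some $C_j$ are controlled by the MST property of $C_j$, while edges lying in $S$ are controlled by the MST property of $S$---provided each $C_j$ is drawn at a scale much smaller than the inter-anchor distances within $S$. The construction of \thmref{Degree4Partition-MST} can be tuned to enforce this scale separation.

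The subtle case---and the main obstacle---is $|V(S)|=1$, where $T_1$ and $T_2$ share only a single vertex $v_0$ and the partition collapses to $\{T_1,T_2\}$; if $\deg_{T_2}(v_0)=5$ then no rooting makes both parts outdegree-$4$. I would dispatch this case directly: draw $T_1$ and $T_2$ independently as MSTs of their vertex sets via \citep{MonmaSuri-DCG92}, interleave their edges at $v_0$ so that each colour class retains angular resolution at least $\pi/3$ (the $5+5$ edges at $v_0$ admit such an alternating placement), and then shrink $T_1$ until every $T_2$-edge at $v_0$ leaves a neighbourhood of $v_0$ containing the entire $T_1$ drawing before it could cross a $T_1$-edge. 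The resulting combined drawing is non-crossing and preserves both MST properties.
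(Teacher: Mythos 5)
Your reduction to \thmref{Degree4Partition-MST} is a genuinely different route from the paper, which proves \thmref{TwoCovering-RNG} by a folding induction: at a vertex $v$ of degree at least $6$ it picks neighbours $x$ with $vx\in E(T_1)-E(T_2)$ and $y$ with $vy\in E(T_2)-E(T_1)$, identifies $x$ and $y$ into $w$, recurses, and then splits $w$ back into two nearby points, using a ``good'' combinatorial embedding to keep the unfolded drawing crossing-free. Your route has a gap at its central step: $T_2$ is not a class of your partition, and the drawing guaranteed by \thmref{Degree4Partition-MST} need not render $T_2$ as a minimum spanning tree. The scale separation you appeal to is not merely unproved --- it fails outright for the construction behind that theorem (\lemref{Degree4Partition-RNG-proof}). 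There, \emph{all} children of a vertex $\alpha$ are placed on a common circle of radius $\delta'$ about $\alpha$, and children belonging to different partition classes are deliberately clustered inside the same small disc $\DISC(s_j,\varepsilon)$ with $\varepsilon\ll\delta'$. So if an anchor $\alpha_j$ has a $T_1$-child $\alpha_{j'}$ that is itself the anchor of another component $C_{j'}$, drawn within $2\varepsilon$ of a $C_j$-child of $\alpha_j$, then there are $x\in C_j$ and $y\in C_{j'}$ with $|xy|$ of order $\varepsilon$ while the $T_2$-path from $x$ to $y$ passes through $\alpha_j$ along two edges of length about $\delta'$; the cycle property fails and $T_2$ is not an MST of $V(T_2)$. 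Repairing this means redesigning the inductive placement and re-verifying all the lens conditions, i.e., proving a new, strengthened lemma rather than invoking the theorem as a black box. Even granting scale separation, for an edge $e$ of $S$ on the path you only obtain $|e|\le|\alpha_j\alpha_{j'}|$ from the MST property of $S$, whereas you need $|e|\le|xy|$ with $|xy|$ possibly smaller than $|\alpha_j\alpha_{j'}|$ by twice the component diameters; that requires a uniform strict gap in the MST inequalities that nothing in the statement of \thmref{Degree4Partition-MST} supplies.

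The case $|V(T_1\cap T_2)|=1$ also does not work as described. Interleaving the edges at $v_0$ so that each colour class has angular resolution at least $\pi/3$ at $v_0$ is necessary but far from sufficient for each $T_i$ to be an MST of its entire vertex set; you would need to draw a degree-$5$ tree as an MST with a \emph{prescribed} set of edge directions at a degree-$5$ root, which \citep{MonmaSuri-DCG92} does not directly provide. Moreover, shrinking $T_1$ about $v_0$ cannot remove crossings with the edges of $T_2$ incident to $v_0$: those segments emanate from the centre of the shrinking region, and a homothety about $v_0$ maps the germ of each such segment at $v_0$ into itself, so any crossing between a branch of $T_1$ and a $T_2$-edge near $v_0$ persists at every scale. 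What is actually needed is control of the angular extent of each branch of $T_1$ about its initial direction, which again requires opening up the recursive construction.
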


A number of notes about
\multithmref{Degree5Partition-MST}{TwoCovering-MST} are in order:
\begin{itemize}
\item Each of
  \threethmref{Degree5Partition-MST}{Degree4Partition-MST}{TwoCovering-MST}
  imply and generalise the above-mentioned result by
  \citet{MonmaSuri-DCG92} that every degree-$5$ tree $T$ can be drawn
  as a minimum spanning tree of its vertex set. (Take $k=1$ in
  \thmref{Degree5Partition-MST}; root $T$ at a leaf in
  \thmref{Degree4Partition-MST}; and take $T_1=T$ and $T_2=\emptyset$
  in \thmref{TwoCovering-MST}.)\

\item \thmref{TwoCovering-MST} cannot be generalised for coverings by three or
more subtrees; see \secref{Covering}. 

\item The above theorems are loosely related to the notion of
  geometric thickness. The \emph{geometric thickness} of a graph $G$
  is the minimum integer $k$ such that there is a straight-line
  drawing of $G$ and an edge $k$-colouring such that monochromatic
  edges do not cross; see \citep{Eppstein-AMS, BMW-EJC06, DEH-JGAA00,
    Duncan,DujWoo-DCG07, DEK-SoCG04, Eppstein01, HSV-CGTA99}. Thus in
  the drawing of $G$, the subgraph induced by each colour class is
  crossing-free. The above theorems also produce drawings in which the
  edges are partitioned into non-crossing subgraph, but with
  additional proximity properties. Moreover, each subgraph of the
  partition is connected, which intuitively at least, is a desirable
  property in visualisation applications.

\item All our proofs are constructive, and lead to polynomial time
  algorithms (in the real RAM model). These algorithmic details are
  omitted.

\end{itemize}

\section{Relative Neighbourhood Graphs}
\seclabel{TreeDrawing}

To aid in the proofs of
\multithmref{Degree5Partition-MST}{TwoCovering-MST}, we now introduce
some notation and a number of geometric objects. Let $x$ and $y$ be
points in the plane. Let $|xy|$ be the Euclidean distance between $x$
and $y$. Let $\CIRC(x,\delta)$ be the circle of radius $\delta$
centred at $x$. Let $\DISC(x,\delta)$ be the open disc of radius
$\delta$ centred at $x$. Let $\CDISC(x,\delta)$ be the closed disc of
radius $\delta$ centred at $x$. As illustrated in \figref{Moon}, for
every real number $\delta$ such that $0<\delta<|xy|$,
let $$\LUNE(x,y,\delta):=(\DISC(y,\delta)-\CDISC(x,|xy|))\cup\{y\}\enspace.$$
The \emph{relative neighbourhood lens}\footnote{Unfortunately the
  computational geometry literature, and especially the literature on
  relative neighbourhood graphs, often refers incorrectly to a `lens' as a
  `lune'.} of $x$ and $y$
is $$\LENS(x,y):=\DISC(x,|xy|)\cap\DISC(y,|xy|)\enspace.$$

\Figure{Moon}{\includegraphics{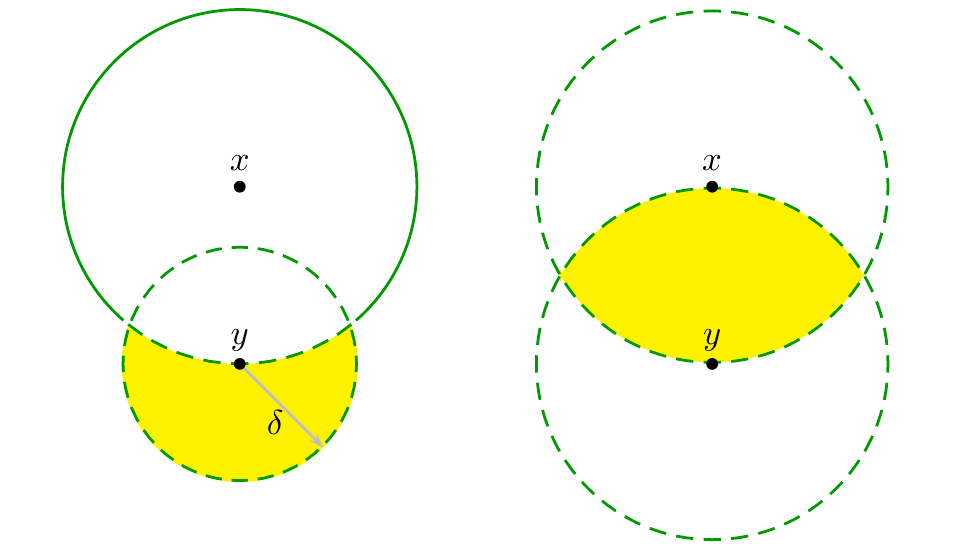}}{The regions $\LUNE(x,y,\delta)$
  and $\LENS(x,y)$}

Let $P\subset\RR$ be a finite set of points in the
plane. \citet{Toussaint-PR80} defined the \emph{relative neighbourhood
  graph} of $P$, denoted by $\RNG(P)$, to be the graph with vertex set
$P$, where two vertices $v,w\in P$ are adjacent if and only if
$\LENS(x,y)\cap P=\emptyset$. That is $v$ and $w$ are adjacent
whenever no vertex is simultaneously closer to $v$ than $w$ and closer
to $w$ than $v$. \citet{Toussaint-PR80} proved that
$\MST(P)\subseteq\RNG(P)$. Hence if $\RNG(P)$ is a tree, then
$\RNG(P)=\MST(P)$. The result of \citet{MonmaSuri-DCG92} mentioned in
\secref{Intro} was strengthened by \citet{BLL-Algo96} as follows.

\begin{lemma}[\citet{BLL-Algo96}]
  \lemlabel{Degree5Tree-RNG} Every degree-$5$ tree has a drawing that
  is the relative neighbourhood graph of its vertex set.
\end{lemma}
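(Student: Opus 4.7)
The plan is to induct on the depth of the rooted tree. Choose the root $r$ to be a leaf, so every non-root vertex of $T$ has at most $4$ children. I would maintain, at each vertex $v$, a circular sector (``cone'') $C_v$ with apex $\phi(v)$ such that (i) the entire subtree rooted at $v$ is drawn inside $C_v$; (ii) the cones of siblings are pairwise disjoint and contained in $C_v$; and (iii) each edge from $v$ to a child has length at most a fixed ratio $\lambda<1$ times the length of the edge from $v$'s parent to $v$.

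For the inductive step at a vertex $v$ with children $c_1,\dots,c_k$ where $k\le 4$, I would pick $k$ rays from $\phi(v)$ inside $C_v$, pairwise separated in angle by at least $\frac{\pi}{3}+\varepsilon$ for a small fixed $\varepsilon>0$, and also separated by at least $\frac{\pi}{3}+\varepsilon$ from the ray back to $\phi(\text{parent}(v))$. Place $\phi(c_i)$ on the $i$-th ray at distance $\delta_i\le\lambda\,|\phi(\text{parent}(v))\phi(v)|$. Assign each $c_i$ a narrow sub-cone $C_{c_i}$ whose aperture and radius are chosen so that $C_{c_i}\subset\DISC(\phi(v),\delta_i)\cap C_v$ and $C_{c_i}$ remains angularly well-separated from every sibling cone and from the parent direction, as seen from $\phi(v)$. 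Then recurse.

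Two things must be verified. For each tree edge $\phi(v)\phi(c_i)$, I would show $\LENS(\phi(v),\phi(c_i))\cap\phi(V(T))=\emptyset$: any other drawn vertex $\phi(w)$ either lies outside $C_v$ (ruled out by containment), or inside some $C_{c_j}$ with $j\neq i$ (the angle $\angle \phi(c_i)\phi(v)\phi(w)\ge\frac{\pi}{3}+\varepsilon$ forces $\phi(w)$ outside $\DISC(\phi(c_i),|\phi(v)\phi(c_i)|)$), or inside $C_{c_i}$ strictly below $\phi(c_i)$ (the length-decay property makes $|\phi(v)\phi(w)|>|\phi(v)\phi(c_i)|$, pushing $\phi(w)$ outside $\DISC(\phi(v),|\phi(v)\phi(c_i)|)$). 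For each non-adjacent pair $(a,b)$, I would exhibit a witness inside $\LENS(\phi(a),\phi(b))$ by inspecting the tree path from $a$ to $b$: by the angular separation at the lowest common ancestor and the fast length decay controlled by $\lambda$, some internal vertex $w$ on this path satisfies both $|\phi(w)\phi(a)|<|\phi(a)\phi(b)|$ and $|\phi(w)\phi(b)|<|\phi(a)\phi(b)|$. This establishes $\RNG(\phi(V(T)))=T$.

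The main obstacle is the quantitative tuning of $\varepsilon$, $\lambda$, and the sub-cone apertures, and in particular the tightness at a degree-$5$ vertex: five outgoing directions (one parent plus four children), each requiring an angular gap of at least $\frac{\pi}{3}$, consume $\frac{5\pi}{3}$ of the available $2\pi$ around $\phi(v)$, leaving only $\frac{\pi}{3}$ of slack to distribute as $\varepsilon$-gaps. I expect the constraints to reduce to a single trigonometric inequality in $\varepsilon$ and $\lambda$ encoding lens-exclusion for a sibling-cone witness, which can be satisfied uniformly by taking $\lambda$ sufficiently small and nesting the sub-cone apertures accordingly; the induction then runs without further obstacles.
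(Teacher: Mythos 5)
Your overall strategy --- root the tree at a leaf, recursively place the at most four children of each vertex on short rays with pairwise angular separation exceeding $\frac{\pi}{3}$, confine each subtree to a small region, and verify lens-emptiness for edges and lens-witnesses for non-edges --- is exactly the inductive scheme of \citet{BLL-Algo96} that the paper cites for this lemma (it gives no proof of its own) and then generalises in \lemref{Degree4Partition-RNG-proof}. The genuine gap is in your inductive invariant: you make the recursive regions \emph{narrow cones}. At any vertex $v$ with four children, the four child rays, being pairwise separated by more than $\frac{\pi}{3}$, must span an angle greater than $\pi$ at $\phi(v)$; hence the region in which the children of $v$ are placed must subtend an angle greater than $\pi$ at $\phi(v)$. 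A ``narrow sub-cone'' $C_{c_i}$ therefore cannot serve as the recursive region for $c_i$: at the next level $c_i$ may again have four children, and they will not fit. So either every $C_v$ has aperture greater than $\pi$ (in which case it is not narrow, and you must re-establish that such wide regions around the children can be kept pairwise disjoint, inside $C_v$, and more than $\frac{\pi}{3}$ away in angle from the edge back to the parent), or the induction fails at depth two. This is not the ``quantitative tuning'' of $\varepsilon$ and $\lambda$ that you flag as the main obstacle; it is a qualitative defect of the invariant.

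The fix used by the paper (following \citet{BLL-Algo96}) is to replace the cone by the lune $\LUNE(p,q,\delta)$: the subtree rooted at the current vertex, drawn at $q$, is confined to the part of a small disc around $q$ lying outside $\CDISC(p,|pq|)$, where $p$ is the parent's position. The point of this shape is that the arc $\CIRC(q,\delta')-\DISC(p,|pq|)$ subtends an angle \emph{greater than} $\pi$ at $q$ --- exactly enough room for four points pairwise more than $\frac{\pi}{3}$ apart --- and each child's lune automatically inherits the same property, so the induction closes. The companion invariant you need, in place of your length-decay heuristic for finding witnesses for non-adjacent pairs, is the one carried in \lemref{Degree4Partition-RNG-proof}: the root of each recursively drawn subtree lies in $\LENS(x,p)$ for every other vertex $x$ of that subtree. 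With that, the witness for a non-edge is always either the root of the relevant subtree or is supplied by the inductive hypothesis, and no global analysis of the tree path between the two vertices is required. Re-running your argument with the lune invariant in place of the cone, the verification goes through essentially as in the paper's Cases 1--4.
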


For all of the theorems introduced in \secref{Intro}, we in fact prove
stronger results about relative neighbourhood graphs.

\section{Drawings Based on a Partition}

\thmref{Degree5Partition-MST} is implied by the following 
result, since a relative neighbourhood graph that is a tree is a
minimum spanning tree.

\begin{theorem}
  \thmlabel{Degree5Partition-RNG} Let $\{T_1,\dots,T_k\}$ be a
  partition of a tree $T$ into degree-$5$ subtrees. Then there is a
  drawing of $T$ in which each $T_i$ is drawn as the relative
  neighbourhood graph of its vertex set.
\end{theorem}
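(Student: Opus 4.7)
The plan is to draw the subtrees one at a time, starting with some $T_1$ and attaching each new subtree inside a tiny disc around the vertex at which it meets the already-drawn portion. Two preliminary observations drive the construction. First, any two subtrees in \PP\ share at most one vertex: if they shared two, then the unique paths between these two vertices, one inside each subtree, would combine to form a cycle in $T$. Second, the bipartite incidence graph $\hat{H}$ whose nodes are the subtrees $T_1,\dots,T_k$ together with the vertices of $T$ lying in more than one subtree, and whose edges join each subtree to the shared vertices it contains, is itself a tree (it is connected because $T$ is, and a cycle in $\hat{H}$ would yield a cycle in $T$).

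Root $\hat{H}$ at $T_1$ and process the subtrees in the induced BFS order. As base case, invoke \lemref{Degree5Tree-RNG} to draw $T_1$ as the \RNG\ of its vertex set. For the inductive step, let $T_i$ be the next subtree to be processed; its parent in $\hat{H}$ is a shared vertex $v$ whose position $p\in\RR$ has already been fixed. No other vertex of $T_i$ is yet placed: every other shared vertex of $T_i$ is a child of $T_i$ in $\hat{H}$ and hence has not yet been reached, while every non-shared vertex of $T_i$ belongs to no other subtree. Apply \lemref{Degree5Tree-RNG} to obtain a drawing $\phi$ of $T_i$ realising it as the \RNG\ of $\phi(V(T_i))$; then translate $\phi$ so that $\phi(v)=p$, rotate it by an angle $\theta_i$ about $p$, and apply a homothety of ratio $\varepsilon_i>0$ centred at $p$. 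Since the \RNG\ is invariant under similarities, the transformed drawing of $T_i$ is still an \RNG\ of its (new) vertex set.

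I would choose $\varepsilon_i$ small enough that every vertex in $V(T_i)\setminus\{v\}$ lands inside an open disc about $p$ that is disjoint from the finite set of previously placed vertices; this is possible because these images converge to $p$ as $\varepsilon_i\to 0$. I would then select $\theta_i$ outside the finite set of angles that would put a newly placed vertex on a line through two previously placed ones. The resulting drawing of $T$ has all vertices in general position, and by construction each $T_i$ is drawn as the \RNG\ of $V(T_i)$.

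The key conceptual point, and essentially the only subtle one, is that $\RNG(V(T_i))$ depends only on the positions of the vertices in $V(T_i)$: vertices drawn elsewhere in the plane cannot affect whether the image of $T_i$ realises its own relative neighbourhood graph. This decoupling is what allows the subtrees to be handled independently, while the tree structure of $\hat{H}$ is what ensures that at each inductive step only a single vertex of $T_i$ is pre-placed, leaving the full similarity group of $\RR$ available to shrink $T_i$ arbitrarily small and to avoid the finitely many collinearity constraints.
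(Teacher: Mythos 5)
Your proof is correct, but it takes a genuinely different route from the paper's. The paper builds one global template: it takes the complete $5$-ary tree $Q$ whose height is the diameter of $T$, draws $Q$ as a relative neighbourhood graph via \lemref{Degree5Tree-RNG}, picks $\varepsilon>0$ so that the \RNG\ structure survives perturbing each vertex within $\DISC(\cdot,\varepsilon)$, and then maps $T$ into $Q$ by a homomorphism that is injective on each $T_i$ (so that $T_i$ is carried onto a subtree $Q_i$ of $Q$), placing each vertex of $T$ inside the disc of its image. You instead glue independent \RNG\ drawings of the $T_i$ along the incidence tree of subtrees and shared vertices, exploiting the similarity-invariance of the \RNG\ and the observation you rightly flag as the crux: $\RNG(\phi(V(T_i)))$ is unaffected by points outside $\phi(V(T_i))$, so the subtrees decouple completely. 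Both arguments work. Yours has two concrete advantages: it invokes \lemref{Degree5Tree-RNG} only on the actual subtrees rather than on an auxiliary tree of size exponential in the diameter of $T$, and it sidesteps the delicate point in the paper's argument that the perturbation-stability of $\RNG(V(Q))$ must also be applied to the \emph{subsets} $P_i$ (deleting points from a point set can create new \RNG\ edges, so this needs the stability claim in a hereditary form). The paper's template, in exchange, requires no analysis of how the subtrees meet: the homomorphism into $Q$ handles vertices lying in arbitrarily many subtrees without any ordering of the pieces. One small repair to your write-up: two subtrees sharing two vertices cannot yield a cycle in $T$; rather, the connecting paths inside the two subtrees must coincide (as $T$ is acyclic), so some edge would lie in both subtrees, contradicting the definition of a partition. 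The analogous care is needed when checking that your incidence graph $\hat H$ is acyclic: a cycle there gives a closed walk in $T$ built from simple paths lying in pairwise distinct parts of the partition, hence a closed walk with no repeated edge, hence a cycle in $T$.
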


\begin{proof}
  Let $D$ be the maximum distance between any two vertices in $T$ (the
  \emph{diameter} of $T$). Let $Q$ be the complete $5$-ary tree of
  height $D$. That is, every non-leaf vertex in $Q$ has degree $5$,
  and for some vertex $r$, the distance between $r$ and every leaf
  equals $D$.

  By \lemref{Degree5Tree-RNG}, there is a drawing of $Q$ that is the
  relative neighbourhood graph of its vertex set. Since the vertices
  of $Q$ are in general position, for some $\varepsilon>0$, for all
  distinct vertices $x,y\in V(Q)$, the discs $\DISC(x,\varepsilon)$
  and $\DISC(y,\varepsilon)$ are disjoint, and if $P$ is a point set
  that contains exactly one point from each disc
  $\DISC(x,\varepsilon)$ (where $x\in V(Q)$), then
  $Q\cong\RNG(P)$. (Here $\DISC(x,\varepsilon)$ means the disc centred
  at the point where $x$ is drawn.)\

  Define a homomorphism\footnote{A \emph{homomorphism} from a graph
    $G$ to a graph $H$ is a function $f:V(G)\rightarrow V(H)$ such
    that if $vw\in E(G)$ then $f(v)f(w)\in E(H)$.}  $f$ from $T$ to
  $Q$ as follows.  Choose an arbitrary starting vertex $v$ of $T$, let
  $f(v)=r$, and recursively construct a function $f$ such that
  $f(v)f(w)$ is an edge of $Q$ for every edge $vw$ of $T$, and if
  $f(v)f(w)=f(v')f(w')$ for distinct edges $vw\in E(T_i)$ and $v'w'\in
  E(T_j)$, then $i\ne j$. That is, edges in the same subtree are
  mapped to distinct edges of $Q$. Hence for each subtree $T_i$ of
  $T$, no two vertices in $T_i$ are mapped to the same vertex in $Q$
  (otherwise the image of the path in $T_i$ between the two vertices
  would form a cycle in $Q$). Moreover, if $Q_i$ is the subgraph of
  $Q$ induced by $\{f(v):v\in V(T_i)\}$ then $Q_i\cong T_i$. Draw each
  vertex $v\in V(T)$ at a distinct point $\phi(v)\in
  \DISC(f(v),\varepsilon)$ so that $\{\phi(v):v\in V(T)\}$ is in
  general position. Thus $P_i:=\{\phi(v):v\in V(T_i)\}$ contains
  exactly one point from each disc $\DISC(x,\varepsilon)$ where $x\in
  V(Q_i)$. Hence $T_i\cong Q_i\cong\RNG(P_i)$ as desired.
\end{proof}

\thmref{Degree4Partition-MST} is implied by the following stronger
result.

\begin{theorem}
  \thmlabel{Degree4Partition-RNG} Let $\{T_1,\dots,T_k\}$ be a
  partition of a rooted tree $T$ into outdegree-$4$ subtrees. Then
  there is a non-crossing drawing of $T$ such that each $T_i$ is
  drawn as the relative neighbourhood graph of its vertex set.
\end{theorem}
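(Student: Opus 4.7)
The plan is to argue by induction on $k$, the number of subtrees in the partition. For the base case $k=1$, I would apply \lemref{Degree5Tree-RNG} to $T=T_1$ to produce an RNG drawing; since every RNG is a subgraph of the corresponding Delaunay triangulation, this drawing is automatically non-crossing.

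For the inductive step, I would select a \emph{leaf} subtree $T_k$ of the ``meta-tree'' on the partition (the graph in which two subtrees are joined whenever they share a vertex). This meta-graph is itself a tree, because any two subtrees in the partition share at most one vertex --- otherwise two internally-disjoint paths would combine to form a cycle in $T$ --- so a leaf exists. Let $v$ be the root of $T_k$, which is the unique vertex of $T_k$ shared with any other subtree. Deleting from $T$ the edges of $T_k$ and the vertices of $T_k$ other than $v$ yields a tree $T'$ with partition $\{T_1,\dots,T_{k-1}\}$ into outdegree-$4$ subtrees. By the inductive hypothesis, $T'$ admits a non-crossing drawing $\phi$ in which every $T_i$ with $i<k$ is drawn as the RNG of its vertex set.

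To extend $\phi$ to $T$, I would draw $T_k$ inside a disc $\DISC(\phi(v),\varepsilon)$, where $\varepsilon>0$ is chosen small enough that the disc contains no vertex of $T'$ other than $v$ and is disjoint from every $T'$-edge not incident to $v$. By \lemref{Degree5Tree-RNG} and a generic rotation, there is an RNG drawing of $T_k$ in which no root-edge is collinear with any edge of $T'$ incident to $\phi(v)$; I would scale this drawing uniformly so that all of $V(T_k)\setminus\{v\}$ sits inside $\DISC(\phi(v),\varepsilon/2)$ and place the root at $\phi(v)$. Uniform scaling, translation, and rotation preserve the RNG property, and because $V(T_k)\setminus\{v\}$ is disjoint from every $V(T_i)$ with $i<k$, the previously-established RNG property of each $T_i$ is unaffected.

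The hard part will be to verify that the extended drawing is non-crossing. Inside $\DISC(\phi(v),\varepsilon)$, only two kinds of edge portions appear: edges of $T_k$ and the radial segments in which $T'$-edges incident to $\phi(v)$ enter the disc. Radial segments from a common point in distinct directions never cross, so the only potential crossings are between a non-root edge of $T_k$ and one of these $T'$-radial segments. To eliminate these, I would exploit the outdegree-$4$ hypothesis in two ways: first, the four root-edges of $T_k$ are separated from one another by at least $\pi/3$ (since they come from an RNG) and can be rotated so that each falls strictly into the interior of one of the angular sectors cut out by the $T'$-rays at $\phi(v)$; second, starting from an RNG drawing of $T_k$ in which each child-subtree extends essentially radially away from $\phi(v)$ (so that its angular extent at $\phi(v)$ can be made as small as desired), I would arrange that the whole of each child-subtree lies inside the sector containing its root-edge. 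Verifying that such a drawing of $T_k$ exists --- and in particular that the ``radial extension'' property can be achieved simultaneously with the RNG property at the root --- is the technical core of the argument, and this is the point where the outdegree-$4$ bound (rather than outdegree-$5$) is indispensable.
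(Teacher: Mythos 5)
Your overall strategy---induct on the number of parts, peel off a ``leaf'' part $T_k$, and insert a shrunken drawing of $T_k$ into a tiny disc around the image of its attachment vertex---is genuinely different from the paper's proof, which inducts on $|V(T)|$ with a global invariant (the whole drawing confined to a region $\LUNE(p,q,\delta)$, with the root in $\LENS(x,p)$ for every other vertex $x$). Your reduction of the RNG conditions is sound: since distinct parts meet in at most one vertex, the new points of $V(T_k)\setminus\{v\}$ cannot disturb $\RNG(V(T_i))$ for $i<k$, and similarity transformations preserve $\RNG(V(T_k))$. However, there are two genuine gaps. The minor one: the ``meta-graph'' on the parts is \emph{not} in general a tree (three single-edge parts sharing one common vertex give a triangle), and even for a part meeting only one other part, the shared vertex need not be its root. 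Both problems disappear if you instead choose $T_k$ to be a part whose root is deepest in $T$: edge-disjointness forces any other part meeting $T_k$ at a non-root vertex $w$ to be rooted at $w$, contradicting maximality. This should be argued, not waved at.

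The major gap is the step you yourself flag as the ``technical core'': you need an RNG drawing of the outdegree-$4$ rooted tree $T_k$, shrunk into $\DISC(\phi(v),\varepsilon)$, in which \emph{each entire child-subtree of the root} lies inside a prescribed narrow wedge at $\phi(v)$, the four wedge directions being chosen pairwise more than $\frac{\pi}{3}$ apart and avoiding the (possibly very many, possibly densely spread) rays of $T'$-edges at $\phi(v)$. \lemref{Degree5Tree-RNG} as a black box gives no such control, and no amount of rotation or uniform scaling of an arbitrary RNG drawing produces it; one must build the drawing recursively so that each child-subtree is trapped in a small disc (equivalently, a lune) near its root, exactly as in the paper's \lemref{Degree4Partition-RNG-proof}. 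In other words, the statement you defer is essentially the paper's inductive lemma specialised to $k=1$ but strengthened to allow the root's child directions to be prescribed anywhere on the full circle (they need not fit on a half-circle, since the top-level root has no parent constraint), together with the lens verifications (cases analogous to Cases 1--4 of the paper's proof) showing the wedge-confined drawing really is an RNG. Until that lemma is stated and proved, the proposal reduces the theorem to a claim that is not easier than the theorem itself.
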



\thmref{Degree4Partition-RNG} is proved by induction with the
following hypothesis. This proof method generalises that of
\citet{BLL-Algo96}.

\begin{lemma}
  \lemlabel{Degree4Partition-RNG-proof} Let $\{T_1,\dots,T_k\}$ be a
  partition of a rooted tree $T$ into outdegree-$4$ subtrees. Let $r$
  be the root of $T$. Let $p$ and $q$ be distinct points in the
  plane. Let $\delta$ be a real number with $0<\delta<|pq|$. Then
  there is a non-crossing drawing of $T$ contained in
  $\LUNE(p,q,\delta)$ such that:
  \begin{itemize}
  \item $r$, which is drawn at $q$, is in $\LENS(x,p)$ for every
    vertex $x$ of $T-r$, and
  \item for all $i\in\{1,\dots,k\}$, the subtree $T_i$ is drawn as the
    relative neighbourhood graph of its vertex set.
  \end{itemize}
\end{lemma}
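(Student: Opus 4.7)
The plan is to proceed by induction on $|V(T)|$, generalising the inductive method of \citet{BLL-Algo96}. The base case $T = \{r\}$ is handled by placing $r$ at $q$, which satisfies all stipulated conditions vacuously.

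For the inductive step, let $v_1, \ldots, v_s$ be the children of $r$ in $T$. Each edge $rv_j$ lies in a unique subtree $T_{i(j)}$ of the partition, and for each index $i$ with $r \in V(T_i)$ the set $U_i := \{v_j : rv_j \in E(T_i)\}$ has $|U_i| \leq 4$. Place $r$ at $q$. The main geometric step is to place the children very close to $q$, stratified by subtree colour. For this I would choose radii $\rho_1 \gg \rho_2 \gg \cdots$, one per non-empty $U_i$, each much smaller than $|pq|$ and decreasing geometrically so the strata will not interfere. For each such $i$, place the vertices of $U_i$ on $\CIRC(q, \rho_i)$, inside the half-plane at $q$ away from $p$, at angles with pairwise separation strictly exceeding $\pi/3$. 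Since $|U_i| \leq 4$ requires only three such gaps, this spans slightly more than $\pi$; this span is available inside $\LUNE(p, q, \delta)$ because at radius $\rho_i \ll |pq|$ the lune's angular aperture at $q$ exceeds $\pi$ by $\Theta(\rho_i / |pq|)$. Finally, perturb the angles infinitesimally so that all $s$ children have pairwise distinct angles.

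Now recurse on each subtree $T_{v_j}$ (the subtree of $T$ rooted at $v_j$). Because $T$ is a tree and each $T_i$ is connected, the intersection $T_i \cap T_{v_j}$ is a connected subtree of $T_{v_j}$ with outdegree at most $4$, so the restricted partition $\{T_i \cap T_{v_j}\}_i$ satisfies the lemma's hypotheses. Invoke the inductive hypothesis with $p' := q$, $q' := v_j$, and a radius $\delta_j' < \rho_{i(j)}/4$ chosen small enough that the resulting recursive drawing stays well inside $\LUNE(p, q, \delta)$, away from the edges $rv_k$ for $k \neq j$, and inside a disc $\DISC(v_j, \delta_j')$ disjoint from all other recursive discs. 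The geometric decrease of the $\rho_i$ ensures such a choice is always possible.

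The output conditions are then verified as follows. Non-crossing follows because recursive drawings occupy pairwise disjoint discs while the edges at $r$ emanate in pairwise distinct directions. The condition $q \in \LENS(x, p)$ for every $x \neq r$ is immediate for each child $v_j$ since $\rho_i \ll |pq|$, and it propagates to deeper vertices by combining the inductive guarantee $v_j \in \LENS(x, q)$ with the triangle inequality. The RNG condition is the main obstacle and fixes the smallness of $\delta_j'$: for each edge of $T_i$ every vertex of $V(T_i)$ must be excluded from its lens, and for each non-edge some vertex of $V(T_i)$ must lie in the lens. In-branch edges have lens of diameter at most $2\delta_j' \ll \rho_i$, while every vertex of $V(T_i)$ outside that branch lies at distance at least $\rho_i$ from $r$ (either being $r$, another element of $U_i$ at distance exactly $\rho_i$, or a descendant kept outside $\CDISC(r, \rho_i)$ by the recursive $\LUNE$-containment). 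For non-edges of $T_i$, the within-branch case is handled by induction; a non-edge between $r$ and a deeper vertex $y$ of branch $v_j$ is witnessed by $v_j$ itself (using $v_j \in \LENS(y, q)$ from the recursion); and the delicate case of a non-edge joining $V(T_i \cap T_{v_j})$ and $V(T_i \cap T_{v_k})$ with $v_j, v_k \in U_i$ is witnessed by $r$, because the strict inequality $\angle v_j r v_k > \pi/3$ gives $|v_j v_k| > \rho_i$, placing $r$ in $\LENS(v_j, v_k)$ with a positive margin $m$, and any choice $\delta_j', \delta_k' < m/4$ preserves $r \in \LENS(x, y)$ for all $x \in V(T_i \cap T_{v_j})$ and $y \in V(T_i \cap T_{v_k})$.
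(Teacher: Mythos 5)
Your proposal is correct and uses the same inductive framework as the paper: the identical induction hypothesis (drawing each branch in a small \LUNE\ around its root, with the root of $T$ at $q$ witnessing cross-branch non-adjacencies), the same recursion into the components of $T-r$ with the restricted partition, and essentially the same four-case verification of the RNG property. Where you genuinely diverge is in the placement of the children of $r$. The paper puts \emph{all} children of $r$ on a single circle $\CIRC(q,\delta')$, grouped into four clusters $s_1,\dots,s_4$ indexed by each child's rank within its own subtree; the clusters are pairwise more than $\frac{\pi}{3}$ apart, so each subtree automatically gets at most one child per cluster and hence pairwise angular separation exceeding $\frac{\pi}{3}$ at a common radius. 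You instead stratify by subtree, giving each part $T_i$ its own circle $\CIRC(q,\rho_i)$ with geometrically shrinking radii and spreading its at most four children over the slightly-more-than-$\pi$ arc available inside the lune. Both schemes deliver the one invariant that matters -- within each $T_i$ the children of $r$ are equidistant from $r$ with pairwise chords exceeding that common distance, so $r$ witnesses every cross-branch non-adjacent pair, and no vertex of $T_i$ other than $r$ enters $\DISC(q,\rho_i)$ (the connectivity of $T_i$ forces every child of $r$ lying in $V(T_i)$ to be joined to $r$ by an edge of $T_i$, which is the fact that makes your stratification safe). The paper's single-radius scheme makes the small-disc conditions (a)--(d) uniform and the bookkeeping cleaner; yours trades the clustering trick for the extra (but routine) verification that the strata at different radii do not interfere with non-crossing or with in-branch lenses. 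One small imprecision to tidy: when excluding out-of-branch vertices of $T_i$ from the lens of an in-branch edge, the relevant bound is their distance to the branch disc $\DISC(v_j,\delta_j')$ (which is at least $\rho_i-\delta_k'$ by the angular separation and the lune exclusions), not their distance to $r$; the facts you set up do give this, but the sentence as written argues the wrong quantity.
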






\begin{proof}
  We proceed by induction on $|V(T)|$. The result is trivial if
  $|V(T)|=1$.  Now assume that $|V(T)|\geq2$.  Let $\delta'$ be a real
  number with $0<\delta'<\delta$.  The circular arc
  $A:=\CIRC(q,\delta')-\DISC(p,|pq|)$ has an angle (measured from $q$)
  greater than $\pi$. Thus, as illustrated in \figref{FourPoints},
  there are four points $s_1,s_2,s_3,s_4$ in the interior of $A$, such
  that the angle (measured from $q$) between distinct points $s_i$ and
  $s_j$ is greater than $\frac{\pi}{3}$, implying
  $|s_iq|=|s_jq|<|s_is_j|$ and $q\in\LENS(s_i,s_j)$, and
  $\LENS(q,s_i)\cap\{s_1,s_2,s_3,s_4\}=\emptyset$.

  \Figure{FourPoints}{\includegraphics{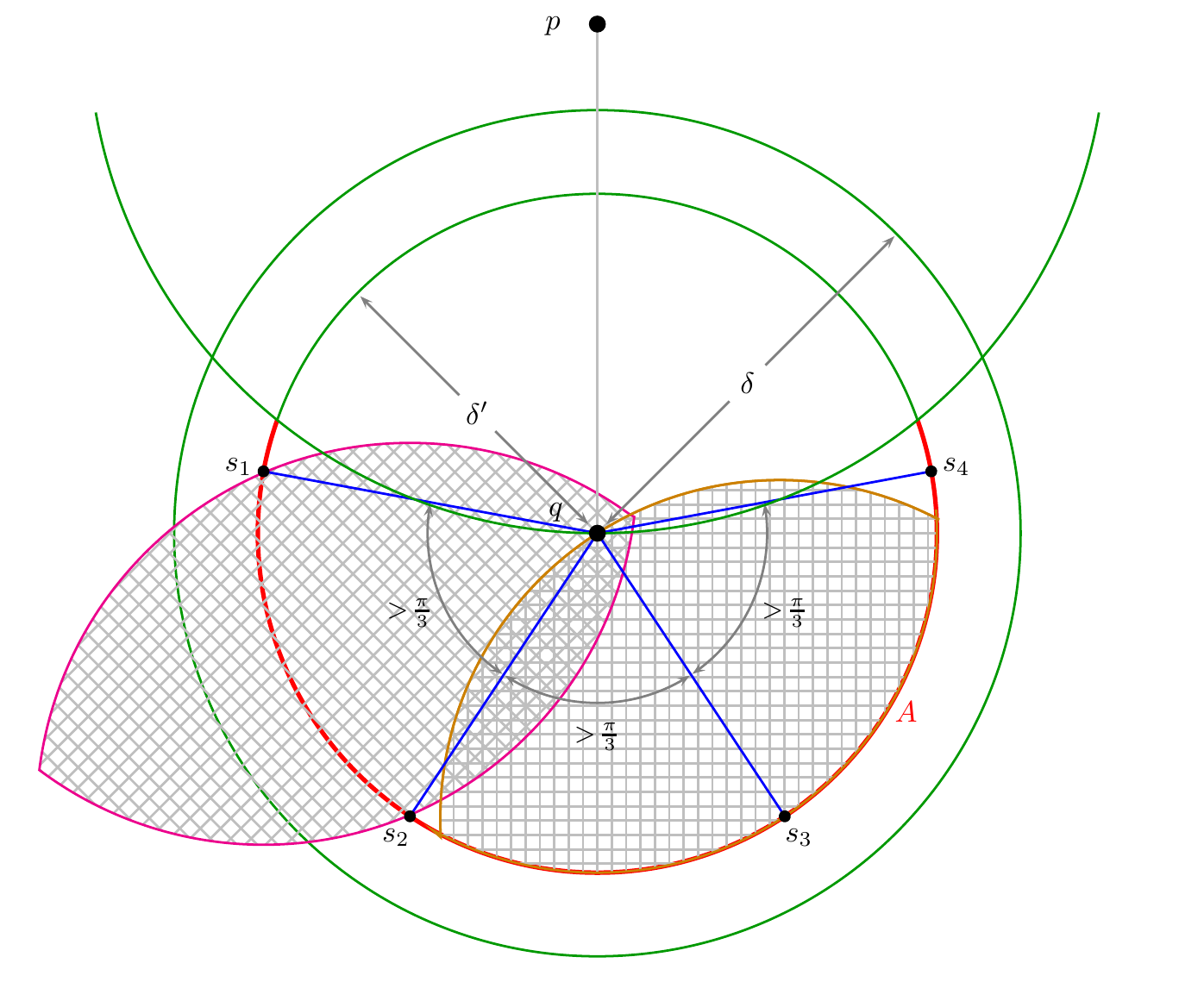}}{The points
    $s_1,s_2,s_3,s_4$, showing that $q\in\LENS(s_1,s_2)$ and
    $\LENS(q,s_3)\cap\{s_1,s_2,s_3,s_4\}=\emptyset$.}

  For small enough discs around the $s_i$, these properties are
  extended to every point in the disc. More precisely, there is a real
  number $\varepsilon\in(0,\delta')$ such that:
  \begin{enumerate}
  \item[(a)] $\DISC(s_i,\varepsilon)\subset\LUNE(p,q,\delta)$ for all
    $i\in\{1,2,3,4\}$;
  \item[(b)] $q\in\LENS(x,y)$ for all points
    $x\in\DISC(s_i,\varepsilon)$ and $y\in\DISC(s_j,\varepsilon)$ for
    all distinct $i, j\in\{1,2,3,4\}$;
  \item[(c)] $q\not\in\LENS(x,y)$ for all points
    $x,y\in\DISC(s_i,\varepsilon)$ for all $i\in\{1,2,3,4\}$; and
  \item[(d)] $\LENS(x,y)\cap \DISC(s_j,\varepsilon)=\emptyset$ for all
    points $x,y\in\DISC(s_i,\varepsilon)$ and for all distinct $i,
    j\in\{1,2,3,4\}$.
  \end{enumerate}

  For $j\in\{1,2,3,4\}$, since $\DISC(s_j,\varepsilon)$ has diameter
  $2\varepsilon$, there are points $t_{j,1},\dots,t_{j,k}$ on the arc
  $A\cap\DISC(s_j,\varepsilon)$ such that discs of radius
  $\frac{\varepsilon}{k}$ centred at $t_{j,1},\dots,t_{j,k}$ are
  pairwise disjoint, as illustrated in \figref{Degree4Partition}.

  \Figure{Degree4Partition}{\includegraphics{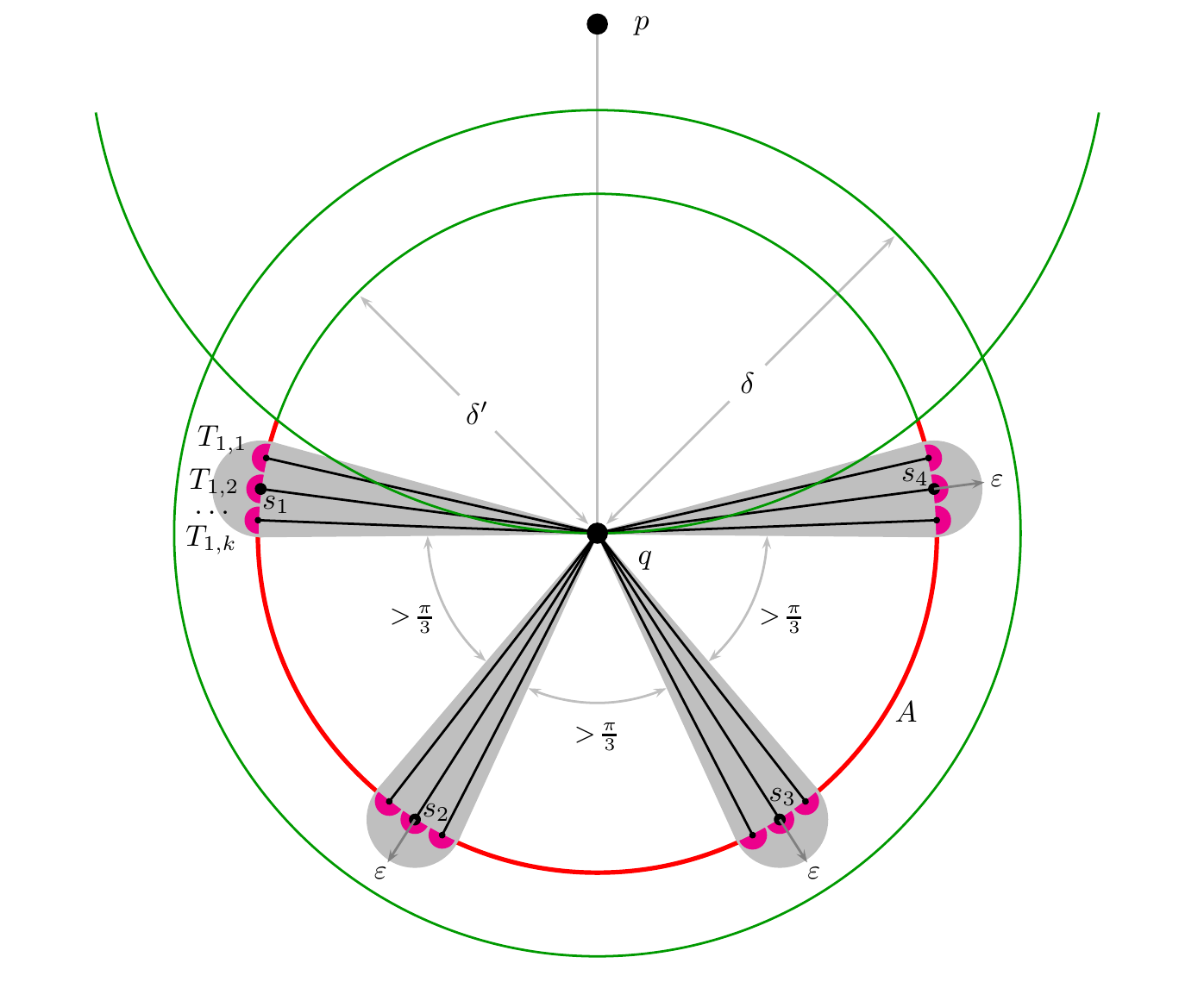}}{Construction
    in the proof of \lemref{Degree4Partition-RNG-proof}.}

  For $i\in\{1,\dots,k\}$, let $d_i$ be the outdegree of $r$ in
  $T_i$. So $d_i\in\{0,1,2,3,4\}$. Let $v_{i,1},\dots,v_{i,d_i}$ be
  the neighbours of $r$ in $T_i$. For $j\in\{1,\dots,d_i\}$, let
  $T_{i,j}$ be the component of $T-r$ that contains $v_{i,j}$. So
  $T_{i,j}$ is rooted at $v_{i,j}$, and $\{T_1\cap
  T_{i,j},\dots,T_k\cap T_{i,j}\}$ is a partition of $T_{i,j}$ into
  outdegree-$4$ subtreess. By induction, there is a non-crossing
  drawing of each $T_{i,j}$ contained in
  $\LUNE(q,t_{j,i},\frac{\varepsilon}{k})$ such that:
  \begin{enumerate}
  \item[(e)] $v_{i,j}$, which is drawn at $t_{j,i}$, is in
    $\LENS(x,q)$ for every vertex $x$ of $T_{i,j}-v_{i,j}$, and
  \item[(f)] for all $\ell\in\{1,\dots,k\}$, the subtree $T_{\ell}\cap
    T_{i,j}$ is drawn as the relative neighbourhood graph of its
    vertex set.
  \end{enumerate}

  Draw $r$ at $q$, and draw a straight-line edge from $r$ to each
  neighbour $v_{i,j}$ of $r$. Each subtree $T_{i,j}$ is drawn outside
  of $\DISC(q,\delta')$, while the edges incident to $r$ are contained
  within $\DISC(q,\delta')$, and therefore do not cross any other
  edge. Hence the drawing of $T$ is non-crossing.
  By (a), $T_{i,j}$ is drawn within
  $\LUNE(q,t_{j,i},\frac{\varepsilon}{k})\subset
  \DISC(t_j,\varepsilon)\subset \LUNE(p,q,\delta)$. The edges incident
  to $r$ are drawn within $\LUNE(p,q,\delta)$. Hence all of $T$ is
  drawn within $\LUNE(p,q,\delta)$.

  Now consider a vertex $x$ of $T-r$. Then $x$ is in $T_{i,j}$ for
  some $i\in\{1,2\dots,k\}$ and $j\in\{1,\dots,d_i\}$. Thus $x$ is
  drawn in $\DISC(q,\delta)-\DISC(p,|pq|)$, implying
  $|xq|<\delta<|xp|$ and $|pq|<|px|$. Hence $q\in\LENS(x,p)$, implying
  $r\in\LENS(x,p)$. This proves the first claim of the induction
  hypothesis.

  It remains to prove that each subtree $T_i$ is drawn as the relative
  neighbourhood graph of its vertex set. Consider distinct vertices
  $v$ and $w$ in $T_i$. We must show that $\LENS(v,w)\cap
  V(T_i)=\emptyset$ if and only if $vw\in E(T_i)$. Without loss of
  generality, $w\neq r$.

  \textbf{Case 1.} $v=r$ and $vw\in E(T_i)$: So $w=v_{i,j}$ for some
  $j\in\{1,2,3,4\}$. Then $v$ is drawn at $q$, and $w$ is drawn at
  $t_{j,i}$ . Now $\LENS(q,t_{j,i})\subset\DISC(q,\delta')$, which
  contains no vertex except $r$ (at $q$). Thus $\LENS(v,w)\cap
  V(T)=\emptyset$, as desired.

  \textbf{Case 2.} $v=r$ and $vw\not\in E(T_i)$: Then $w$ is in
  $T_{i,j}$ for some $j\in\{1,2,3,4\}$. Since $v$ is drawn at $q$, by
  induction, the vertex $t_{j,i}$, which is in $T_i$, is in
  $\LENS(v,w)$, as desired.

  Now assume that $v\neq r$ and $w\neq r$.

  \textbf{Case 3.} $v$ and $w$ are in the same component $T_{\ell,j}$
  of $T-r$: Then $v$ and $w$ are drawn within
  $\DISC(t_{\ell},\varepsilon)$. Each vertex in $T_i$ is $r$, is in
  $T_{\ell,j}$, or is in $T_{i,j'}$ for some $j'\neq j$.  Since $r$ is
  drawn at $q$, (c) implies that $r\not\in\LENS(v,w)$.  Since
  $T_{i,j'}$ is drawn within $\DISC(t_{j',i},\varepsilon)$, by (d),
  $\LENS(v,w)\cap V(T_{i,j'})=\emptyset$. Hence $\LENS(v,w)\cap
  V(T_i)=\emptyset$ if and only if $\LENS(v,w)\cap V(T_{\ell,j})\cap
  T_i=\emptyset$. By induction, $\LENS(v,w)\cap V(T_i)=\emptyset$ if
  and only if $v$ and $w$ are adjacent in $T_i$, as desired.



  \textbf{Case 4.} $v$ and $w$ are in distinct components of $T-r$:
  Thus $r$ is in $T_i$, $v$ is in $T_{i,j}$ and $w\in T_{i,j'}$ for
  some $j\neq j'$, and $v$ and $w$ are not adjacent. By construction,
  $v$ is drawn in $\DISC(s_j,\varepsilon)$ and $w$ is drawn in
  $\DISC(s_{j'},\varepsilon)$. Thus (b) implies that
  $q\in\LENS(v,w)$. Thus $r$, which is drawn at $q$, is in
  $\LENS(v,w)$, as desired.
\end{proof}







\section{Drawings with Large Angular Resolution}

\thmref{Degree3Partition-MST} is implied by the following stronger
result:

\begin{theorem}
  \thmlabel{Degree3Partition-RNG} Let $\{T_1,\dots,T_k\}$ be a
  partition of a rooted tree $T$ into outdegree-$3$ subtrees. Then
  there is a non-crossing drawing of $T$ with angular resolution at
  least $\frac{\pi}{\max\{\Delta^+(T)-1,4\}}$ such that each subtree
  $T_i$ is drawn as the relative neighbourhood graph of its vertex
  set.
\end{theorem}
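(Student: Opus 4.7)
The plan is to prove \thmref{Degree3Partition-RNG} by induction on $|V(T)|$, following the template of \lemref{Degree4Partition-RNG-proof}. I would state and prove a strengthened inductive lemma: for any partition of a rooted tree $T$ into outdegree-$3$ subtrees, and any lune $\LUNE(p,q,\delta)$, there is a non-crossing drawing of $T$ inside the lune, with root $r$ at $q$, each $T_i$ drawn as the relative neighbourhood graph of its vertex set, $r\in\LENS(x,p)$ for every vertex $x\neq r$, and angular resolution at least $\alpha:=\pi/\max\{\Delta^+(T)-1,4\}$ at every vertex. \thmref{Degree3Partition-RNG} then follows by invoking this lemma with any pair of points $(p,q)$.

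The inductive step mirrors \lemref{Degree4Partition-RNG-proof}: place $r$ at $q$ and position its $d\le\Delta^+(T)$ children at $d$ distinct points on the arc $A=\CIRC(q,\delta')\setminus\DISC(p,|pq|)$. Choosing $\delta'$ sufficiently close to $|pq|$ makes the angular extent of $A$ at $q$ close to $4\pi/3$, yielding a slack $\beta>0$ beyond the minimum extent~$\pi$. Each child's subtree is drawn recursively inside a small sub-lune, and the non-crossing, lens-to-$p$, and lune-containment properties are verified exactly as in \lemref{Degree4Partition-RNG-proof}.

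The new ingredient is the angular arrangement of $r$'s children, which must satisfy: (i) consecutive positions at angular distance at least $\alpha$ (for the global resolution), and (ii) for each $T_i$ with $d_i\ge 2$, consecutive $T_i$-positions (in angular order) at angular distance strictly greater than $\pi/3$. Since each $T_i$'s children lie on the circle $\CIRC(q,\delta')$, condition (ii), together with the fact that the span of $T_i$'s children fits inside the arc and hence is less than $4\pi/3$, implies that any pair of same-$T_i$ siblings $v,w$ is at angular distance in $(\pi/3,4\pi/3)\subset(\pi/3,5\pi/3)$; this is precisely the range in which $r\in\LENS(v,w)$, so $T_i$ is drawn as an RNG. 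I would realise (i) and (ii) with a nearly uniform spacing $\alpha'\in[\alpha,(\pi+\beta)/(d-1)]$ and an interleaved assignment: sort the subtrees by $d_i$ descending and place $T_i$'s $j$-th child at the $i$-th slot of the $j$-th ``round''. With $g:=\lceil d/3\rceil$, consecutive same-$T_i$ positional gaps are at least $g$; choosing $\alpha'$ slightly above $\pi/(3g)$ gives $g\alpha'>\pi/3$.

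The main obstacle is that naive round-robin fails on unbalanced partitions, e.g.\ a single outdegree-$3$ subtree coexisting with many singletons: its last two children would end up at positional gap $k_2=1$. I would patch this with a greedy placement that first positions all subtrees with $d_i\ge 2$ at widely spaced slots and only then fills the remaining slots with singletons; the parameter $\alpha'$ is chosen as a function of the outdegree sequence $(d_1,\ldots,d_k)$ so that $g\alpha'>\pi/3$ for the minimum required~$g$. The arithmetic showing that the resulting total extent $(d-1)\alpha'$ stays within $\pi+\beta$ for every admissible outdegree sequence with $\sum d_i\le\Delta^+(T)$ and each $d_i\le 3$ is the technical crux. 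Once the layout is fixed, verifying non-crossing, containment, and the RNG property for each $T_i$ parallels the four-case analysis of \lemref{Degree4Partition-RNG-proof}.
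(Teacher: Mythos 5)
Your overall strategy coincides with the paper's: the same strengthened inductive hypothesis (the analogue of \lemref{Degree4Partition-RNG-proof}, namely \lemref{Degree3Partition-RNG-proof}), the same recursive placement of the children of $r$ on the arc $A=\CIRC(q,\delta')-\DISC(p,|pq|)$, and the same two competing constraints on that placement --- near-uniform spacing of about $\pi/(d-1)$ for the angular resolution versus angular separation greater than $\frac{\pi}{3}$ between children of $r$ that lie in the same subtree $T_i$. You also correctly diagnose that a naive round-robin interleaving fails for unbalanced outdegree sequences (one outdegree-$3$ subtree plus many singletons). However, your proposal stops exactly where the proof actually lives: you offer a ``greedy placement'' whose feasibility you explicitly defer (``the arithmetic \dots is the technical crux''), and neither of your two candidate schemes is verified --- indeed you refute the first one yourself. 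In particular, with $g=\lceil d/3\rceil$ and spacing slightly above $\pi/(3g)$ you have not shown that this spacing is simultaneously at least $\pi/\max\{\Delta^+(T)-1,4\}$, that $d-1$ such gaps fit in the available arc, and that an assignment exists in which \emph{every} pair of same-subtree siblings is at least $g$ slots apart. That is a genuine gap, not a routine verification.

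The paper closes this gap with one explicit ordering. Writing $X=\{i:d_i=3\}$, $Y=\{i:d_i=2\}$, $Z=\{i:d_i=1\}$ and splitting $Z=Z'\cup Z''$ with $|Z''|\le|Z'|\le|Z''|+1$, it orders the children of $r$ as: first children of $X$, first children of $Y$, the singletons in $Z'$, second children of $X$, second children of $Y$, the singletons in $Z''$, third children of $X$. Any two children of $r$ in the same $T_i$ are then at least $|X|+|Y|+|Z''|\ge\frac{d-1}{3}$ positions apart, so choosing $s_1,\dots,s_d$ with the angle between $s_a$ and $s_b$ exceeding $\pi|b-a|/(d-1)$ gives same-subtree separation greater than $\frac{\pi}{3}$ while consecutive children are at angular distance at least $\frac{\pi}{d-1}\ge\frac{\pi}{\Delta^+(T)-1}$. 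Note also that your manoeuvre of enlarging the arc towards $\frac{4\pi}{3}$ by pushing $\delta'$ close to $|pq|$ is unnecessary: an arc of angle just over $\pi$ suffices, and since the whole arc subtends less than $\frac{5\pi}{3}$ at $q$, the upper bound you worry about for $q\in\LENS(v,w)$ holds automatically. If you replace your unverified greedy step by such an explicit interleaving, the remainder of your argument (containment, non-crossing, and the four-case RNG check) does parallel the paper's.
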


\thmref{Degree3Partition-RNG} is proved by induction with the
following hypothesis.

\begin{lemma}
  \lemlabel{Degree3Partition-RNG-proof} Let $\{T_1,\dots,T_k\}$ be a
  partition of a rooted tree $T$ into outdegree-$3$ subtrees. Let $r$
  be the root of $T$. Let $p$ and $q$ be distinct points in the
  plane. Let $\delta$ be a real number with $0<\delta<|pq|$. Then
  there is a non-crossing drawing of $T$ contained in
  $\LUNE(p,q,\delta)$ such that:
  \begin{itemize}
  \item $r$, which is drawn at $q$, is in $\LENS(x,p)$ for every
    vertex $x$ of $T-r$, and
  \item for all $i\in\{1,\dots,k\}$, the subtree $T_i$ is drawn as the
    relative neighbourhood graph of its vertex set, and
  \item the drawing of $T$ has angular resolution greater than
    $\frac{\pi}{\max\{4,\Delta^+(T)-1\}}$.
  \end{itemize}
\end{lemma}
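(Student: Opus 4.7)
The plan is to induct on $|V(T)|$, following the scheme of \lemref{Degree4Partition-RNG-proof} but modified so that each child of $r$ receives its own direction (rather than being clustered near one of four fixed slots), which is what will allow angular resolution. The base case $|V(T)|=1$ is trivial. For the inductive step, let $d := \outdeg_T(r)$ and $m := \max\{4,\Delta^+(T)-1\}$; since every partition class has outdegree $\le 3$ at $r$, we have $d \le \Delta^+(T) \le m+1$. Choose $\delta'\in(0,\delta)$ so that the arc $A := \CIRC(q,\delta') - \DISC(p,|pq|)$ subtends at $q$ an angle $\Theta > \pi$.

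The core step is to select distinct points $t_1,\dots,t_d \in A$, at angles $\theta_1<\cdots<\theta_d$ from $q$, as positions for the children of $r$, satisfying
\begin{itemize}
\item[(i)] $\theta_{j+1}-\theta_j > \pi/m$ for every $j$, and
\item[(ii)] $\theta_{j'}-\theta_j > \pi/3$ whenever the children placed at $t_j$ and $t_{j'}$ lie in the same subtree $T_i$.
\end{itemize}
Condition (i) provides angular resolution $>\pi/m$ at $r$, while condition (ii) (together with $|qt_j|=|qt_{j'}|=\delta'$) forces $q \in \LENS(t_j,t_{j'})$, so that $r$ witnesses the non-edge $t_jt_{j'}$ in the RNG of $T_i$. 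The main technical obstacle will be proving that such a placement exists on an arc barely exceeding $\pi$; this succeeds because each subtree contributes at most $3$ children of $r$ and $d\le m+1$. Concretely, a uniform spacing $\Theta/(d-1) > \pi/m$ handles (i), and an explicit packing---placing the at most $\lfloor d/3\rfloor$ colour classes of size $3$ at shifted arithmetic progressions with common difference $L:=\lceil(d-1)/3\rceil$, then filling the remaining positions with members of smaller classes---delivers (ii). Non-uniform spacing handles small-$d$ edge cases, most notably $d=3$ with all children in one subtree, where three gaps each exceeding $\pi/3$ fit easily in $A$.

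With the points $t_j$ fixed, choose $\varepsilon>0$ small enough that the discs $\DISC(t_j,\varepsilon)$ are pairwise disjoint, are contained in $\LUNE(p,q,\delta)$, and satisfy the analogues of conditions (a)--(d) from the proof of \lemref{Degree4Partition-RNG-proof}, with the analogue of (b) restricted to pairs $x\in\DISC(t_j,\varepsilon), y\in\DISC(t_{j'},\varepsilon)$ where $t_j, t_{j'}$ are same-subtree (this follows from (ii) by continuity), while the analogues of (a), (c) and (d) are automatic for small $\varepsilon$. Apply the induction hypothesis to draw each subtree rooted at a child of $r$ inside a suitably small lune $\LUNE(q,t_j,\varepsilon_j)$, and connect $r$ at $q$ to each $t_j$ by a straight-line edge. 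Non-crossing, containment in $\LUNE(p,q,\delta)$, the $\LENS(x,p)$-property of the first bullet, and the RNG property of each $T_i$ then follow by the same case analysis as in \lemref{Degree4Partition-RNG-proof}.

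For angular resolution: at $r$, condition (i) applies directly. At any non-root vertex $v$ drawn at $t_v$, the edges from $v$ to its children lie inside $\LUNE(q,t_v,\varepsilon_v)$, which for small $\varepsilon_v$ is contained in the open half-plane through $t_v$ perpendicular to $qt_v$ and not containing $q$; each such edge thus makes angle strictly greater than $\pi/2 > \pi/m$ with the parent edge $vr$. The pairwise angles among edges from $v$ to its children exceed $\pi/\max\{4,\Delta^+(T_v)-1\} \ge \pi/m$ by induction, using $\Delta^+(T_v)\le\Delta^+(T)$. Thus the hard part is the combinatorial/geometric arrangement in (i)--(ii); everything else closely parallels the earlier proof.
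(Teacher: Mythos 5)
Your proposal follows essentially the same route as the paper's proof: induction with the identical hypothesis, children of $r$ placed on the arc $A$ with uniform spacing just over $\frac{\pi}{d-1}$ to get the resolution at $r$, and an ordering forcing same-subtree children to subtend an angle greater than $\frac{\pi}{3}$ at $q$; the ``explicit packing'' you gesture at is realised in the paper by the concrete interleaving ($X$-firsts, $Y$-firsts, $Z'$, $X$-seconds, $Y$-seconds, $Z''$, $X$-thirds), which gives position gap at least $|X|+|Y|+|Z''|\ge\frac{d-1}{3}$ for every same-subtree pair and so discharges your condition (ii) uniformly, without needing separate small-$d$ cases. One small correction: $\LUNE(q,t_v,\varepsilon_v)$ is \emph{not} contained in the open half-plane bounded by the line through $t_v$ perpendicular to $qt_v$ on the side away from $q$, because the complement of $\CDISC(q,|qt_v|)$ protrudes slightly past that tangent line towards $q$; consequently the angle between the parent edge $vr$ and a child edge is only greater than $\arccos\bigl(\varepsilon_v/(2|qt_v|)\bigr)$, which is slightly less than $\frac{\pi}{2}$ but still greater than $\frac{\pi}{3}$, and in particular at least the $\frac{\pi}{4}$ that the paper asserts and that the bound $\frac{\pi}{\max\{4,\Delta^+(T)-1\}}$ requires, so your conclusion survives with the constant repaired.
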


\begin{proof}
  We proceed by induction on $|V(T)|$. The result is trivial if
  $|V(T)|=1$.  Now assume that $|V(T)|\geq2$.  Let $\delta'$ be a real
  number with $0<\delta'<\delta$.

  Let $d:=\outdeg(r)$. For $i\in\{1,\dots,k\}$, let $d_i$ be the
  outdegree of $r$ in $T_i$. So $d_i\in\{0,1,2,3\}$ and
  $d=\sum_{i=1}^kd_i$. Let $v_{i,1},\dots,v_{i,d_i}$ be the neighbours
  of $r$ in $T_i$.  Let
$$X:=\{i:d_i=3\},\;\;\;
Y:=\{i:d_i=2\},\;\;\; Z:=\{i:d_i=1\}\enspace.$$ Thus
$d=3|X|+2|Y|+|Z|$. Partition $Z=Z'\cup Z''$ such that
$|Z''|\leq|Z'|\leq |Z''|+1$.

The circular arc $A:=\CIRC(q,\delta')-\DISC(p,|pq|)$ has an angle
(measured from $q$) greater than $\pi$. Thus there are points
$s_1,\dots,s_d$ in this order on $A$ such that the angle (measured
from $q$) between distinct points $s_a$ and $s_b$ is greater than
$\frac{\pi |b-a|}{d-1}$.

Let $\preceq$ be the total ordering of the neighbours of $r$ such that
$ \{v_{i,1}:i\in X\}\preceq\{v_{i,1}:i\in Y\} \preceq\{v_{i,1}:i\in
Z'\}\preceq\{v_{i,2}:i\in X\}\preceq\{v_{i,2}:i\in
Y\}\preceq\{v_{i,1}:i\in Z''\}\preceq\{v_{i,3}:i\in X\}$, where within
each set, the vertices are ordered by their $i$-value.  Draw the
neighbours of $r$ in the order of $\preceq$ at $s_1,\dots,s_d$. That
is, the first vertex in $\preceq$ is drawn at $s_1$, the second vertex
in $\preceq$ is drawn at $s_2$, and so on. Let $t_{i,j}$ be the point
where $v_{i,j}$ is drawn.

Consider distinct vertices $v_{i,j}$ and $v_{i,\ell}$ in some subtree
$T_i$ such that $\ell>j$. Say $t_{i,j}=s_a$ and
$t_{i,\ell}=s_b$. Observe that $b-a\geq |X|+|Y|+|Z''|\geq
|X|+|Y|+\half(|Z|-1)\geq \third(3|X|+2|Y|+|Z|-1)=\frac{d-1}{3}$.
Hence the angle (measured from $q$) between $v_{i,j}$ and $v_{i,\ell}$
is greater than $\frac{\pi (d-1)/3}{d-1}=\frac{\pi}{3}$.  This implies
that $|t_{i,j}q|=|t_{i,\ell} q|<|t_{i,j}t_{i,\ell}|$. Thus
$q\in\LENS(t_{i,j},t_{i,\ell})$ and
$t_{i,\ell}\not\in\LENS(q,t_{i,j})$ and
$t_{i,j}\not\in\LENS(q,t_{i,\ell})$.

For small enough discs around $s_1,\dots,s_d$, these properties are
extended to every point in the disc. More precisely, there is a real
number $\varepsilon\in(0,\delta')$ such that:
\begin{enumerate}
\item[(a)] $\DISC(s_a,\varepsilon)\subset\LUNE(p,q,\delta)$ for all
  $a\in\{1,\dots,d\}$;
\item[(b)] $q\in\LENS(x,y)$ for all points
  $x\in\DISC(t_{i,j},\varepsilon)$ and
  $y\in\DISC(t_{i,\ell},\varepsilon)$ for all distinct vertices
  $v_{i,j}$ and $v_{i,\ell}$ in the same subtree $T_i$;
\item[(c)] $q\not\in\LENS(x,y)$ for all points
  $x,y\in\DISC(s_a,\varepsilon)$ for all $a\in\{1,\dots,d\}$; and
\item[(d)] $\LENS(x,y)\cap \DISC(s_b,\varepsilon)=\emptyset$ for all
  distinct $a,b\in\{1,\dots,d\}$ and for all points
  $x,y\in\DISC(s_a,\varepsilon)$.
\end{enumerate}

\Figure{OutdegreeThree}{\includegraphics{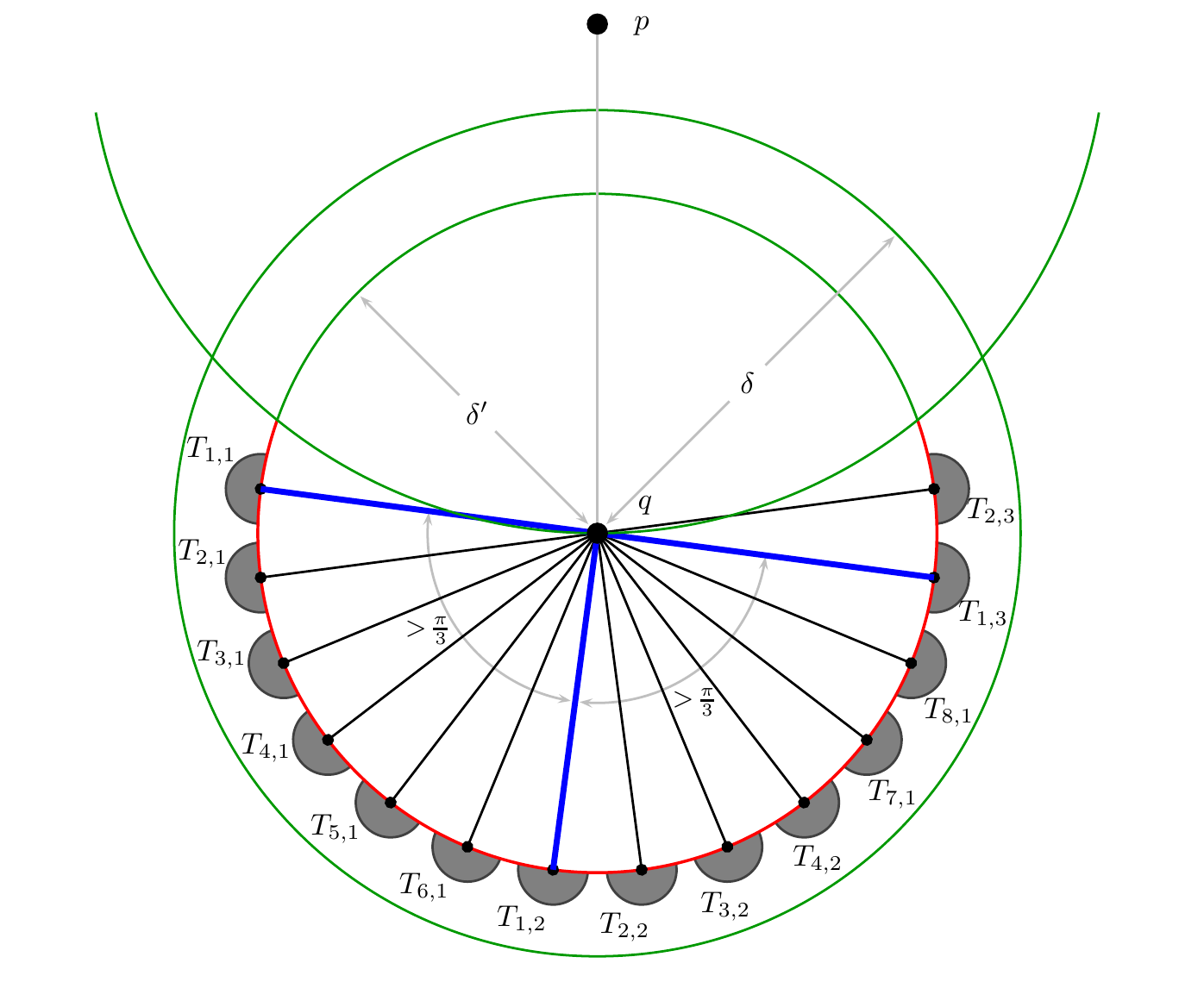}}{Construction
  in the proof of \lemref{Degree3Partition-RNG-proof}. Here
  $X=\{1,2\}$, $Y=\{3,4\}$, $Z'=\{5,6\}$ and $Z''=\{7,8\}$. The tree
  $T_1$ is highlighted.}

For $i\in\{1,\dots,k\}$ and $j\in\{1,\dots,d_i\}$, let $T_{i,j}$ be
the component of $T-r$ that contains $v_{i,j}$.  Each subtree
$T_{i,j}$ is rooted at $v_{i,j}$, and $\{T_1\cap T_{i,j},\dots,T_k\cap
T_{i,j}\}$ is a partition of $T_{i,j}$ into outdegree-$3$
subtreess. By induction, there is a non-crossing drawing of $T_{i,j}$
contained in $\LUNE(q,t_{i,j},\varepsilon)$ such that:
\begin{enumerate}
\item[(e)] $v_{i,j}$, which is drawn at $t_{i,j}$, is in $\LENS(x,q)$
  for every vertex $x$ of $T_{i,j}-v_{i,j}$; and
\item[(f)] for all $\ell\in\{1,\dots,k\}$, the subtree $T_{\ell}\cap
  T_{i,j}$ is drawn as the relative neighbourhood graph of its vertex
  set; and
\item[(g)] the drawing of $T_{i,j}$ has angular resolution greater
  than $\frac{\pi}{\max\{\Delta^+(T_{i,j})-1,4\}}$, which is at least
  $\frac{\pi}{\max\{\Delta^+(T)-1,4\}}$.
\end{enumerate}

Draw $r$ at $q$, and draw a straight-line edge from $r$ to each
neighbour $v_{i,j}$ of $r$. The angle between two edges incident to
$r$ is at least $\frac{\pi}{d-1}\geq \frac{\pi}{\Delta^+(T)-1}$. The
angle between an edge $rv_{i,j}$ and each edge $v_{i,j}x$ in $T_{i,j}$
is at least $\frac{\pi}{4}$. With (g), this proves the third claim of
the lemma.

Each subtree $T_{i,j}$ is drawn outside of $\DISC(q,\delta')$, while
the edges incident to $r$ are contained within $\DISC(q,\delta')$, and
therefore do not cross any other edge. Hence the drawing of $T$ is
non-crossing.  By (a), $T_{i,j}$ is drawn within
$\LUNE(q,t_{i,j},\varepsilon)\subset\DISC(t_{i,j},\varepsilon)\subset
\LUNE(p,q,\delta)$. The edges incident to $r$ are drawn within
$\LUNE(p,q,\delta)$. Hence all of $T$ is drawn within
$\LUNE(p,q,\delta)$.

Now consider a vertex $x$ of $T-r$. Then $x$ is in $T_{i,j}$ for some
$i\in\{1,\dots,k\}$ and $j\in\{1,\dots,d_i\}$. Thus $x$ is drawn in
$\DISC(q,\delta)-\DISC(p,|pq|)$, implying $|xq|<\delta<|xp|$ and
$|pq|<|px|$. Hence $q\in\LENS(x,p)$, implying $r\in\LENS(x,p)$. This
proves the first claim of the lemma.

It remains to prove that each subtree $T_i$ is drawn as the relative
neighbourhood graph of its vertex set. Consider distinct vertices $v$
and $w$ in $T_i$. We must show that $\LENS(v,w)\cap V(T_i)=\emptyset$
if and only if $vw\in E(T_i)$. Without loss of generality, $w\neq r$.

\textbf{Case 1.} $v=r$ and $vw\in E(T_i)$: So $w=v_{i,j}$ for some
$j\in\{1,2,3\}$. Then $v$ is drawn at $q$, and $w$ is drawn at
$t_{i,j}$ . Now $\LENS(q,t_{i,j})\subset\DISC(q,\delta')$, which
contains no vertex except $r$ (at $q$). Thus $\LENS(v,w)\cap
V(T)=\emptyset$, as desired.

\textbf{Case 2.} $v=r$ and $vw\not\in E(T_i)$: Then $w$ is in
$T_{i,j}$ for some $j\in\{1,2,3\}$, but $w\neq v_{i,j}$. Since $v$ is
drawn at $q$, by (e), the vertex $v_{i,j}$, which is in $T_i$, is in
$\LENS(v,w)$, as desired.

Now assume that $v\neq r$ and $w\neq r$.

\textbf{Case 3.} $v$ and $w$ are in the same component $T_{\ell,j}$ of
$T-r$, for some $\ell\in\{1,\dots,k\}$: Then $v$ and $w$ are drawn
within $\DISC(t_{\ell,j},\varepsilon)$. Each vertex in $T_i$ is $r$,
is in $T_{\ell,j}$, or is in $T_{i,j'}$ for some $(i,j')\neq(\ell,j)$.
Since $r$ is drawn at $q$, (c) implies that $r\not\in\LENS(v,w)$.
Since $T_{i,j'}$ is drawn within $\DISC(t_{i,j'},\varepsilon)$, by
(d), $\LENS(v,w)\cap V(T_{i,j'})=\emptyset$. Hence $\LENS(v,w)\cap
V(T_i)=\emptyset$ if and only if $\LENS(v,w)\cap V(T_{\ell,j})\cap
T_i=\emptyset$. By (f), $\LENS(v,w)\cap V(T_i)=\emptyset$ if and only
if $v$ and $w$ are adjacent in $T_i$, as desired.



\textbf{Case 4.} $v$ and $w$ are in distinct components of $T-r$: Thus
$r$ is in $T_i$, $v$ is in $T_{i,j}$ and $w\in T_{i,j'}$ for some
$j\neq j'$, and $v$ and $w$ are not adjacent. By construction, $v$ is
drawn in $\DISC(t_{i,j},\varepsilon)$ and $w$ is drawn in
$\DISC(t_{i,j'},\varepsilon)$. Thus (b) implies that
$q\in\LENS(v,w)$. Thus $r$, which is drawn at $q$, is in $\LENS(v,w)$,
as desired.

Therefore the subtree $T_i$ is drawn as the relative neighbourhood
graph of its vertex set. This completes the proof.
\end{proof}

\section{Drawings Based on a Covering}
\seclabel{Covering}

\thmref{TwoCovering-RNG} below establishes a result for relative
neighbourhood graphs that implies \thmref{TwoCovering-MST} for minimum
spanning trees. Before proving \thmref{TwoCovering-RNG} we give a
simpler proof of a weaker result, in which the obtained drawing might
have crossings.

\begin{proposition}
  \proplabel{TwoCovering-RNG-Crossings} Let $\{T_1,T_2\}$ be a
  covering of a tree $T$ by degree-$5$ subtrees. Then there is a
  drawing of $T$ in which each $T_i$ is drawn as the relative
  neighbourhood graph of its vertex set.
\end{proposition}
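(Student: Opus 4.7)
The plan is to mimic the proof of \thmref{Degree5Partition-RNG} essentially verbatim, with only minor care taken to handle edges of $E(T_1) \cap E(T_2)$ that lie in both subtrees at once. First I would set $D$ equal to the diameter of $T$ and let $Q$ be the complete $5$-ary tree of height $D$. By \lemref{Degree5Tree-RNG}, $Q$ has a drawing as the relative neighbourhood graph of its vertex set, and since the drawing is in general position, one can choose $\varepsilon>0$ such that the discs $\DISC(x,\varepsilon)$ with $x \in V(Q)$ are pairwise disjoint and any point set obtained by selecting exactly one point from each disc realises $Q$ as its own relative neighbourhood graph.

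Next I would construct a homomorphism $f : T \to Q$ by a recursive traversal of $T$ starting from an arbitrary vertex $v_0$ with $f(v_0) = r$, maintaining the invariant that for each $i \in \{1, 2\}$, any two distinct edges of $T_i$ are mapped to distinct edges of $Q$. At each step, when visiting $v$ and placing a child $w$, the image $f(w)$ is chosen in $N_Q(f(v))$ so as to avoid, for every $i$ with $vw \in E(T_i)$, the $f$-images of the already-placed $T_i$-neighbours of $v$. The key observation making this feasible is the subtree-connectivity property of the covering: if $u, v \in V(T_i)$ and $uv \in E(T)$, then $uv \in E(T_i)$, for otherwise the $T_i$-path from $u$ to $v$ together with $uv$ would form a cycle in the tree $T$. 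Hence, for $\{i,j\}=\{1,2\}$, any neighbour $w$ of $v$ with $vw \in E(T_i) \setminus E(T_j)$ satisfies $w \notin V(T_j)$, which prevents the $T_1$- and $T_2$-constraints from interfering beyond what the five $Q$-neighbours of $f(v)$ can accommodate.

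The rest of the argument then copies \thmref{Degree5Partition-RNG} directly. By the cycle-in-a-tree argument, if distinct $u, v \in V(T_i)$ had $f(u) = f(v)$, the image of the $T_i$-path from $u$ to $v$ would be a closed walk in $Q$ with pairwise distinct edges, impossible in a tree. So $f$ restricted to $V(T_i)$ is injective, and the subgraph $Q_i$ of $Q$ induced by $f(V(T_i))$ is isomorphic to $T_i$. I then place each $v \in V(T)$ at a distinct point $\phi(v) \in \DISC(f(v), \varepsilon)$ with $\{\phi(v) : v \in V(T)\}$ in general position; the set $P_i := \{\phi(v) : v \in V(T_i)\}$ then contains exactly one point from each disc around a vertex of $Q_i$, so $\RNG(P_i) = Q_i \cong T_i$.

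I expect the main obstacle to be the feasibility of the slot-assignment in the second paragraph: for a partition each edge lies in exactly one subtree and the slot constraints at any vertex decouple across subtrees, whereas in the covering setting a shared edge is subject simultaneously to $T_1$- and $T_2$-constraints. The subtree-connectivity remark is precisely what makes everything still work, by forcing a $T_i$-only neighbour of $v$ to lie outside the vertex set of the other subtree.
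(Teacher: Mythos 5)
Your route is genuinely different from the paper's: the paper proves this proposition by induction on $|V(T)|$, taking a vertex $v$ of degree at least $6$, folding two edges $vx\in E(T_1)-E(T_2)$ and $vy\in E(T_2)-E(T_1)$ into a single edge $vw$ that lies in both subtrees of the smaller covering, applying induction, and then splitting $w$ back into two nearby points inside a small disc. Re-running the homomorphism-into-$Q$ argument of \thmref{Degree5Partition-RNG} is a legitimate alternative for this (crossings-allowed) statement, and everything after the construction of $f$ does carry over essentially verbatim.

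However, there is a gap at exactly the step you single out as the crux, and the subtree-connectivity observation does not close it. The issue is not whether a $T_i$-only neighbour of $v$ lies in $V(T_j)$; it is whether the greedy slot assignment you describe is feasible, and as stated it can fail. Consider a vertex $v$ whose parent edge lies in $E(T_1)\cap E(T_2)$ and which has two $T_1$-only children, two $T_2$-only children and two shared children, so that $\deg_{T_1}(v)=\deg_{T_2}(v)=5$, which the hypothesis permits. If the traversal happens to place the two $T_1$-only children at two fresh neighbours of $f(v)$ and the two $T_2$-only children at the remaining two, then each of the five neighbours of $f(v)$ already carries a $T_1$-edge or a $T_2$-edge, and no slot is left for a shared child, which must avoid both kinds. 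The fix is easy but has to be said: at each vertex assign the $c$ edges of $E(T_1)\cap E(T_2)$ incident to $v$ first, to $c$ distinct neighbours of $f(v)$ (keeping the parent's slot if the parent edge is shared); since $\deg_{T_i}(v)\le 5$, the remaining $T_1$-only edges then fit injectively into the other $5-c$ slots, and the $T_2$-only edges independently fit into (possibly the same) $5-c$ slots. With that ordering, the injectivity of $f$ on each $V(T_i)$, the isomorphism $Q_i\cong T_i$, and the perturbation step all go through as in \thmref{Degree5Partition-RNG}. Note finally that the paper's folding argument has the further payoff of upgrading, in \lemref{TwoCovering-RNG}, to a non-crossing drawing respecting a good embedding, which the homomorphism construction does not obviously provide.
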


\begin{proof}
  We proceed by induction on $|V(T)|$. If $\Delta(T)\leq 5$ then
  $T\cong\RNG(P)$ for some point set $P$ by
  \lemref{Degree5Tree-RNG}. This drawing is crossing-free since it
  also a minimum spanning tree. Furthermore, each $T_i$ is drawn as
  the relative neighbourhood graph of the subset of $P$ representing
  $T_i$. Now assume that $\Delta(T)\geq6$. Thus $\deg_T(v)\geq6$ for
  some vertex $v$. Hence there are edges $vx\in E(T_1)-E(T_2)$ and
  $vy\in E(T_2)-E(T_1)$. Let $T'$ be the tree obtained from $T$ by
  identifying $x$ and $y$ into a new vertex $w$. (This operation is
  called an \emph{elementary homomorphism} or \emph{folding}; see
  \citep{CookEvans, Evans-JGT86, GaoHahn-DM95, BS-CN81} and
  \figref{FoldTree}.)\ Let $T'_i$ be the subtrees of $T'$ determined
  by $T_i$ for $i\in\{1,2\}$. Note that the edge $vw$ is in $T_1'\cap
  T_2'$. Observe that $\{T_1',T_2'\}$ is a covering of $T'$ by
  degree-$5$ subtrees. By induction, there is a drawing of $T'$ such
  that each $T_i'$ is the relative neighbourhood graph of its vertex
  set. Moreover, for some $\varepsilon>0$, if $w$ is moved to any
  point in $\DISC(w,\varepsilon)$ then in the resulting drawing of
  $T'$, each $T_i'$ is drawn as the relative neighbourhood graph of
  its vertex set. Consider a drawing of $T$ in which every vertex in
  $V(T)-\{x,y\}$ inherits is position in the drawing of $T'$, and $x$
  and $y$ are assigned distinct points in
  $\DISC(w,\varepsilon)$. Since $x\in V(T_1)-V(T_2)$ and $y\in
  V(T_2)-V(T_1)$, each $T_i$ is drawn as the relative neighbourhood
  graph of its vertex set in the drawing of $T$.
\end{proof}

\Figure{FoldTree}{\includegraphics{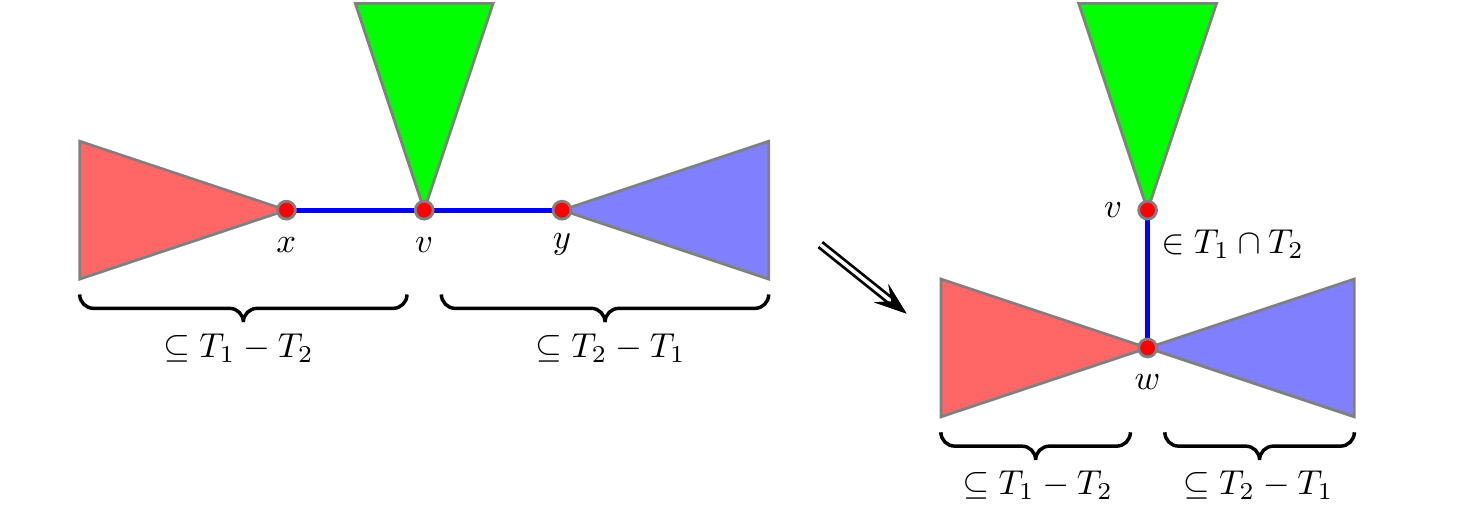}}{Folding the tree $T$ in
  the proof of \propref{TwoCovering-RNG-Crossings}.}

We now strengthen \propref{TwoCovering-RNG-Crossings} by showing that
the drawing of $T$ can be made crossing-free. \thmref{TwoCovering-MST}
is implied by the following stronger result:

\begin{theorem}
  \thmlabel{TwoCovering-RNG} Let $\{T_1,T_2\}$ be a covering of a tree
  $T$ by degree-$5$ subtrees. Then there is a non-crossing drawing of
  $T$ such that each $T_i$ is drawn as the relative neighbourhood
  graph of its vertex set.
\end{theorem}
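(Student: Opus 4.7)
The plan is to adapt the inductive lune-based argument of \lemref{Degree4Partition-RNG-proof} to the covering setting by formulating a strengthened inductive lemma: given a covering $\{T_1,T_2\}$ of a rooted tree $T$ by degree-$5$ subtrees, together with distinct points $p,q$ in the plane and a real $\delta$ with $0<\delta<|pq|$, there is a non-crossing drawing of $T$ contained in $\LUNE(p,q,\delta)$ in which the root $r$ is drawn at $q$, lies in $\LENS(x,p)$ for every vertex $x$ of $T-r$, and each $T_i$ is drawn as the relative neighbourhood graph of its vertex set. \thmref{TwoCovering-RNG} then follows by rooting $T$ arbitrarily and invoking the lemma with any valid choice of $p$, $q$, and $\delta$.

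The induction is on $|V(T)|$. For the inductive step, $r$ is placed at $q$, the children of $r$ are positioned at carefully chosen points on the arc $A:=\CIRC(q,\delta')\setminus\DISC(p,|pq|)$ for some $\delta'<\delta$, and each subtree $T_{i,j}$ (the component of $T-r$ containing a child $v_{i,j}$) is drawn recursively in a small sub-lune around $v_{i,j}$, with $\varepsilon>0$ chosen so that the sub-lunes are pairwise disjoint, are contained in $\LUNE(p,q,\delta)$, and satisfy four geometric properties analogous to (a)--(d) of \lemref{Degree4Partition-RNG-proof}. The verification that each $T_i$ is drawn as its RNG then proceeds by a case analysis mirroring Cases 1--4 of \lemref{Degree4Partition-RNG-proof}, using the analogue of property (b) to place $r$ in $\LENS(v,w)$ whenever $v,w$ lie in different components of $T-r$, and the analogue of property (d) to exclude vertices in foreign sub-lunes from the lens of a pair within a single sub-lune.

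The hardest step is securing the angular layout of the children of $r$: the children of $r$ in each $T_i$ must be placed at pairwise angular distance greater than $\pi/3$ as seen from $q$, so that $q\in\LENS(v_{i,j},v_{i,\ell})$ and the RNG property of $T_i$ holds at $r$. In the partition setting of \lemref{Degree4Partition-RNG-proof} each $T_i$ has outdegree at most $4$, so only $4$ such directions (total angular footprint slightly above $\pi$) are needed, and these fit comfortably inside the arc $A$, whose angular extent exceeds $\pi$. In the covering setting each $T_i$ can have up to $5$ children of $r$, requiring $5$ directions of pairwise separation above $\pi/3$ (total angular footprint above $4\pi/3$), whereas the angular extent of $A$ is strictly less than $4\pi/3$ whenever $\delta'<|pq|$. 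Overcoming this geometric obstruction is the crux of the proof: likely strategies include generalising the ambient region to one with angular extent close to $2\pi$ at $q$ while preserving the lens condition on $p$ (for instance by using two opposing reference points in place of the single $p$, or a disc-with-slit region), or exploiting the covering structure and choosing the root so that the effective outdegree in each $T_i$ is at most $4$ throughout the recursion, by arranging that at least one edge incident to every recursively-processed vertex lies in $E(T_1)\cap E(T_2)$ and therefore plays the role of the parent edge simultaneously in both subtrees. Once the geometric placement around $r$ is resolved, the remainder of the argument follows the template of \lemref{Degree4Partition-RNG-proof} essentially verbatim.
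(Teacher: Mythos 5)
Your proposal is an outline, not a proof: the step you yourself label ``the crux'' --- placing up to five children of $r$ belonging to a single $T_i$ at pairwise angular separation greater than $\frac{\pi}{3}$ inside the arc $A$ --- is left unresolved, and neither of the two escape routes you sketch survives scrutiny. You correctly compute that $A=\CIRC(q,\delta')-\DISC(p,|pq|)$ has angular extent strictly less than $\frac{4\pi}{3}$, while five such directions need extent greater than $\frac{4\pi}{3}$, so the direct adaptation of \lemref{Degree4Partition-RNG-proof} breaks. Your second remedy (arrange that every recursively-processed vertex has an incident edge in $E(T_1)\cap E(T_2)$ serving as its parent edge in both subtrees) is impossible in general: take a vertex $v$ of degree $10$ with five incident edges in $E(T_1)-E(T_2)$ and five in $E(T_2)-E(T_1)$; whatever the root, the parent edge of $v$ lies in only one subtree, so $v$ has outdegree $5$ in the other, and $E(T_1)\cap E(T_2)$ may well contain no edge at $v$. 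Your first remedy (enlarging the ambient region to angular extent near $2\pi$) conflicts with the invariant that the drawing of $T-r$ lie outside $\CDISC(p,|pq|)$, which is exactly what guarantees $q\in\LENS(x,p)$ and lets the recursion compose; widening the arc past $\frac{4\pi}{3}$ forces points inside $\DISC(p,|pq|)$ and destroys that invariant. So as written the argument does not go through, and no concrete repair is supplied.

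The paper proves the theorem by an entirely different mechanism: \emph{folding}. If $\Delta(T)\leq 5$ the whole of $T$ is drawn as an RNG by \lemref{Degree5Tree-RNG} and there is nothing to do. Otherwise some vertex $v$ has degree at least $6$, hence has incident edges $vx\in E(T_1)-E(T_2)$ and $vy\in E(T_2)-E(T_1)$; identifying $x$ and $y$ into one vertex $w$ yields a smaller tree $T'$ still covered by two degree-$5$ subtrees, which is drawn by induction, and then $w$ is split back into two nearby points inside a small disc $\DISC(w,\varepsilon)$ chosen so that each $T_i$ remains an RNG of its vertex set (using $x\in V(T_1)-V(T_2)$ and $y\in V(T_2)-V(T_1)$, and the stability of the RNG under small perturbation). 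Crossings are avoided by carrying a ``good'' combinatorial embedding through the induction --- at each vertex the edges of $E(T_1)-E(T_2)$, $E(T_1)\cap E(T_2)$ and $E(T_2)-E(T_1)$ appear consecutively --- which lets $vx$ and $vy$ be chosen consecutive before folding and lets $x$ and $y$ be placed on either side of a separating ray when unfolding. This sidesteps the angular obstruction entirely, which is precisely why the lune recursion is not the right tool here.
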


The proof of \thmref{TwoCovering-RNG} depends on the following
definition. A \emph{combinatorial embedding} of a graph is a cyclic
ordering of the edges incident to each vertex. We define a
combinatorial embedding of a graph $G$, with respect to a covering
$\{G_1,G_2\}$ of $G$, to be \emph{good} if for each vertex $v$ of $G$,
in the clockwise ordering of the edges incident to $v$, the edges in
$E(G_1)-E(G_2)$ are grouped together, followed by the edges in
$E(G_1)\cap E(G_2)$, followed by the edges in $E(G_2)-E(G_1)$. Since
every tree, covered by two subtrees, obviously has a good embedding,
\thmref{TwoCovering-RNG} now follows from the next lemma:

\begin{lemma}
  \lemlabel{TwoCovering-RNG} Let $\{T_1,T_2\}$ be a covering of a tree
  $T$ by degree-$5$ subtrees. For every good combinatorial embeddding
  of $T$, with respect to $\{T_1,T_2\}$, there is a non-crossing
  drawing of $T$ such that each $T_i$ is drawn as the relative
  neighbourhood graph of its vertex set, and the given combinatorial
  embedding of $T$ is preserved in the drawing.
\end{lemma}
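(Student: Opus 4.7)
My plan is to proceed by induction on $|V(T)|$, adapting the folding argument of \propref{TwoCovering-RNG-Crossings} so as to also preserve the given good embedding. The base case $|V(T)|=1$ is trivial. More generally, when $\Delta(T)\leq 5$, I root $T$ at a leaf so that the rooted tree has outdegree at most $4$, and apply \lemref{Degree4Partition-RNG-proof} to the trivial one-part partition $\PP=\{T\}$. That construction leaves full freedom in the cyclic ordering of children at each vertex, so I can realize the given good embedding; and by the same reasoning used at the start of the proof of \propref{TwoCovering-RNG-Crossings}, the resulting RNG drawing of $T$ simultaneously represents each $T_i$ as the RNG of $V(T_i)$.

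For the inductive step with $\Delta(T)\geq 6$, let $v$ be a vertex with $\deg_T(v)\geq 6$. Since $\Delta(T_1),\Delta(T_2)\leq 5$, there must exist neighbours $x,y$ of $v$ with $vx\in E(T_1)\setminus E(T_2)$ and $vy\in E(T_2)\setminus E(T_1)$. I would choose $vx$ to be the last edge in the $T_1\setminus T_2$ group at $v$ and $vy$ to be the first edge in the $T_2\setminus T_1$ group at $v$, as given by the good embedding. Identify $x$ and $y$ into a new vertex $w$ to obtain the tree $T'$ with covering $\{T_1',T_2'\}$ by degree-$5$ subtrees, where $vw\in E(T_1')\cap E(T_2')$. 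The cyclic order at every vertex other than $v$ is inherited unchanged from the good embedding of $T$, and at $v$ the new edge $vw$ is inserted into the $T_1'\cap T_2'$ group at a position adjacent to the old positions of $vx$ and $vy$; the resulting embedding of $T'$ is good. Apply induction to obtain a non-crossing drawing of $T'$ preserving this embedding with each $T_i'$ drawn as the RNG of its vertex set. Then unfold by replacing $w$ with two distinct points in a small disc $\DISC(w,\varepsilon)$, placing the point for $x$ slightly displaced in the direction of the former $T_1\setminus T_2$ group at $v$ and the point for $y$ in the direction of the former $T_2\setminus T_1$ group, so that the cyclic order at $v$ in the resulting drawing of $T$ matches the given good embedding.

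The main obstacle is this unfolding step when the $T_1\cap T_2$ group at $v$ in $T$ is non-empty: the $\beta\geq 1$ edges of that group lie strictly between $vx$ and $vy$ in the cyclic order at $v$, so the points chosen for $x$ and $y$ must bracket the endpoints of those $\beta$ edges in the drawing, while also remaining inside $\DISC(w,\varepsilon)$ so that the continuity argument of \propref{TwoCovering-RNG-Crossings} preserves the RNG property of each $T_i$. To resolve this, I would strengthen the inductive hypothesis to control the angular geometry at $v$, requiring that the $T_1'\cap T_2'$ neighbours of $v$ in the drawing of $T'$ all lie inside a narrow angular wedge at $v$, narrow enough that $x$ and $y$ can be placed inside $\DISC(w,\varepsilon)$ on opposite sides of the wedge and thus in the correct cyclic positions. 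This extra geometric bookkeeping is in the spirit of the lune-based inductive hypotheses used in \twolemref{Degree4Partition-RNG-proof}{Degree3Partition-RNG-proof}, and the fact that the good embedding restricts how many edges the shared group at $v$ can contain (bounded by $\min(\deg_{T_1}(v),\deg_{T_2}(v))\leq 5$) means only finitely many angular constraints need to be enforced at each level of the recursion.
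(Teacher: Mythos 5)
Your overall strategy (induction, fold $x$ and $y$ into $w$, recurse, unfold inside $\DISC(w,\varepsilon)$) is the paper's, and your base case is acceptable. But the inductive step has a genuine gap, and you have in fact created the obstacle yourself by choosing the wrong pair of edges. You take $vx$ to be the \emph{last} edge of the $E(T_1)-E(T_2)$ block at $v$ and $vy$ the \emph{first} edge of the $E(T_2)-E(T_1)$ block, so that the $\beta$ shared edges sit between them. The point of a good embedding is that the ordering at $v$ is \emph{cyclic}: the $E(T_2)-E(T_1)$ block wraps around to meet the $E(T_1)-E(T_2)$ block directly, with nothing in between. Since $\deg_T(v)\geq 6$ and each $T_i$ is degree-$5$, both of these blocks are non-empty, so there is a pair $vx\in E(T_1)-E(T_2)$ and $vy\in E(T_2)-E(T_1)$ that are \emph{consecutive} in the cyclic order at $v$. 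This is the pair the paper folds. Then no edges need to be bracketed: the new edge $vw$ simply occupies the position of the consecutive pair, and the unfolding is a purely local perturbation --- place $x$ at $w$ and $y$ on a ray $R$ from $w$, inside $\DISC(w,\varepsilon)$, where $R$ separates the $T_1-E(T_2)$ edges at $w$ from the $T_2-E(T_1)$ edges at $w$. No strengthening of the induction hypothesis is needed.

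The repair you propose for your configuration --- strengthening the induction so that all $T_1'\cap T_2'$ neighbours of $v$ lie in a narrow angular wedge at $v$ --- cannot work. Those shared neighbours are all neighbours of $v$ in $T_1'$, which by the induction hypothesis is drawn as the relative neighbourhood graph (hence minimum spanning tree) of its vertex set; consequently any two edges of $T_1'$ incident to $v$ subtend an angle greater than $\frac{\pi}{3}$ at $v$. So if the shared group at $v$ has two or more edges (e.g.\ $\beta\geq 1$, so $vw$ plus at least one other), the smallest wedge containing them has angular width greater than $\frac{\pi}{3}$. Meanwhile $x$ and $y$ must both lie in $\DISC(w,\varepsilon)$ for the continuity argument that preserves the RNG property, so as seen from $v$ they subtend an angle of order $\varepsilon/|vw|$, which is arbitrarily small; they cannot lie on opposite sides of a wedge of width greater than $\frac{\pi}{3}$. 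So the extra geometric bookkeeping you defer to is not merely unproven but impossible, and the argument must instead use the consecutive pair as above.
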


\begin{proof}
  We proceed by induction on $|V(T)|$. If $\Delta(T)\leq 5$ then
  $T\cong\RNG(P)$ for some point set $P$ by
  \lemref{Degree5Tree-RNG}. This drawing is crossing-free since it
  also a minimum spanning tree. Moreover, by examining the proof of
  \lemref{Degree5Tree-RNG}, it is easily seen that any given
  combinatorial embedding of $T$ can be preserved in the drawing. Each
  $T_i$ is drawn as the relative neighbourhood graph of the subset of
  $P$ representing $T_i$. Now assume that $\deg_T(v)\geq6$ for some
  vertex $v$. Hence there are edges $vx\in E(T_1)-E(T_2)$ and $vy\in
  E(T_2)-E(T_1)$ such that $vx$ and $vy$ are consecutive in the cyclic
  ordering of the edges incident to $v$.

  Let $T'$ be the tree obtained from $T$ by identifying $x$ and $y$
  into a new vertex $w$. Let $T'_i$ be the subtrees of $T'$ determined
  by $T_i$ for $i\in\{1,2\}$. Note that the edge $vw$ is in $T_1'\cap
  T_2'$. The cyclic ordering of the edges in $T'$ incident to $v$ is
  obtained from the cyclic ordering of the edges in $T$ incident to
  $v$ by replacing $vx$ and $vy$ (which are consecutive) by $vw$. And
  $N_{T'}(w)$ is ordered
  $(N_{T_1-E(T_2)}(x),wv,N_{T_2-E(T_1)}(y))$. Other vertices keep
  their ordering in $T$.

  Observe that $\{T_1',T_2'\}$ is a covering of $T'$ by degree-$5$
  subtrees. By induction, there is a non-crossing drawing of $T'$ such
  that each $T_i'$ is the relative neighbourhood graph of its vertex
  set, and the given combinatorial embedding of $T$ is preserved in
  the drawing. For some $\varepsilon>0$, if $w$ is moved to any point
  in $\DISC(w,\varepsilon)$ then in the resulting drawing of $T'$,
  each $T_i'$ is drawn as the relative neighbourhood graph of its
  vertex set, and the given combinatorial embedding of $T$ is
  preserved. Consider a drawing of $T$ in which every vertex in
  $V(T)-\{x,y\}$ inherits is position in the drawing of $T'$, and $x$
  and $y$ are assigned distinct points in
  $\DISC(w,\varepsilon)$. Since $x\in V(T_1)-V(T_2)$ and $y\in
  V(T_2)-V(T_1)$, each $T_i$ is drawn as the relative neighbourhood
  graph of its vertex set in the drawing of $T$. It remains to assign
  points for $x$ and $y$ in $\DISC(w,\varepsilon)$ so that the drawing
  of $T$ is crossing-free. In the drawing of $T'$, the edges incident
  to $w$ are ordered $(N_{T_1-E(T_2)}(x),wv,N_{T_2-E(T_1)}(y))$. Let
  $R$ be a ray centred at $w$ that separates the edges in $T_1-E(T_2)$
  incident to $w$ and those in $T_2-E(T_1)$ incident to $w$, such that
  $v$ is not on the extension of $R$. At most one of $x$ and $y$, say
  $x$, has neighbours on both sides of the extension of $R$.  As
  illustrated in \figref{UnFoldTree}, position $x$ at $w$, and
  position $y$ on $R$ and inside $\DISC(w,\varepsilon)$. It follows
  that there are no crossings and the correct ordering of edges is
  preserved at $v$, $x$ and $y$.
\end{proof}

\Figure{UnFoldTree}{\includegraphics{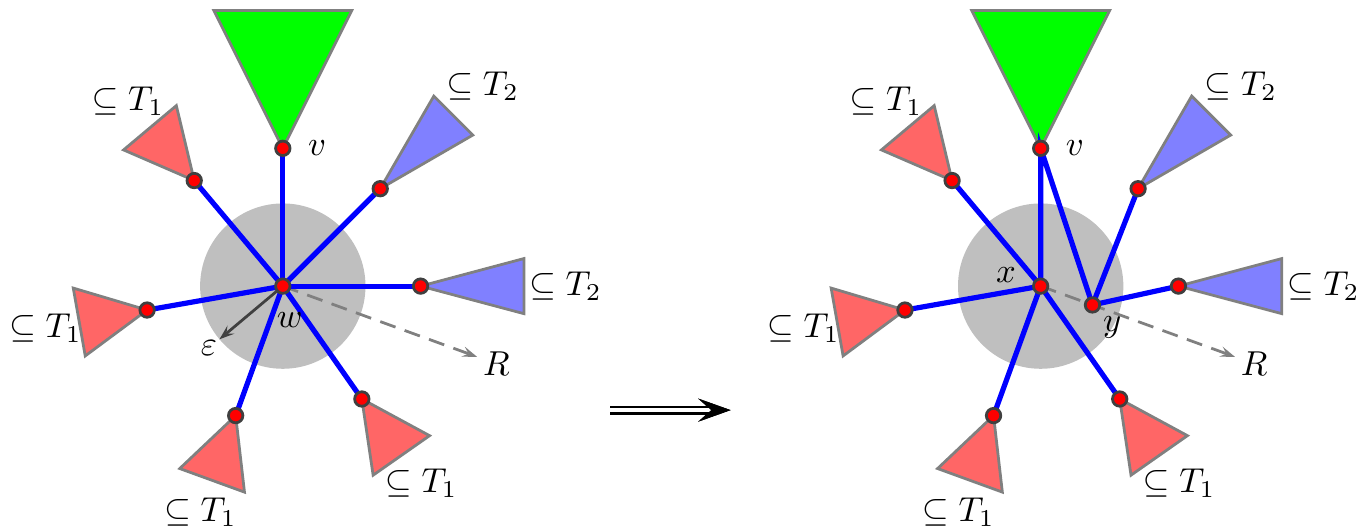}}{Producing a drawing
  of $T$ given a drawing of $T'$ in the proof of
  \lemref{TwoCovering-RNG}.}

We now show that \thmref{TwoCovering-MST} cannot be generalised for
coverings by three or more subtrees. (Thus neither
\propref{TwoCovering-RNG-Crossings} nor \thmref{TwoCovering-RNG} can
be similarly generalised.)\ Let $T$ be the $6$-star with root $r$ and
leaves $v_1,\dots,v_6$.  Let $\{T_1,T_2,T_3\}$ be the following
covering of $T$.  Let $T_1$ be the subtree of $T$ induced by
$\{r,v_1,v_2,v_3,v_4\}$.  Let $T_2$ be the subtree of $T$ induced by
$\{r,v_1,v_2,v_5,v_6\}$.  Let $T_3$ be the subtree of $T$ induced by
$\{r,v_3,v_4,v_5,v_6\}$.  Thus each $T_i$ is a $4$-star.  Suppose on
the contrary that $T$ has a drawing such that each $T_i$ is drawn as a
minimum spanning tree of its vertex set.  The angle $\angle v_irv_j$
between some pair of consecutive edges $rv_i$ and $rv_j$ (in the
cyclic order around $r$) is less than $\frac{\pi}{3}$ since no three
vertices are collinear.  Since $v_i$ and $v_j$ are each in two
subtrees, and $r$ is in every subtree, the vertices $r,v_i,v_j$ are in
a common subtree $T_{\ell}$.  Every minimum spanning tree has angular
resolution at least $\frac{\pi}{3}$. Thus $T_{\ell}$ is not drawn as a
minimum spanning tree. This contradiction proves there is no drawing
of $T$ such that each $T_i$ is drawn as a minimum spanning tree of its
vertex set.  Note that this argument generalises to show that if
$P_1,\dots,P_{15}$ are the $\binom{6}{2}$ paths through the root of
the $6$-star $T$, then in every drawing of $T$, some $P_i$ is not a
minimum spanning tree of its vertex set.

\section{Further Research}

This paper has not analysed the area of the drawings produced by our
algorithms.  It would be interesting to consider whether there are
drawings whose area is polynomial in the number of vertices of the
given tree, for example when the tree is partitioned into outdegree-3
subtrees.  While the problem of drawing a tree as a minimum spanning
tree in polynomial area is open in the general case
\citep{MonmaSuri-DCG92}, \citet{Kaufmann-GD05} proved that every
degree-4 tree has a drawing as a minimum spanning tree in polynomial
area; also see \citep{PV-CGTA04}.



A second direction for further research is to extend the approach used
in this paper to other types of proximity drawings of trees; see
\citep{Liotta-ProximityDrawings}.  For example, every degree-4 tree
admits a w-$\beta$-drawing for all values of $\beta$ in 
$(\cos(\frac{2 \pi}{5})^{-1}, \infty)$; see
\citep[Theorem~7]{DLW-JDA06}. Given a partition of a rooted tree $T$
into outdegree-3 subtrees and a value of $\beta$ in the above
interval, is there a drawing of $T$ in which each subtree is drawn as
a w-$\beta$-drawing?

The results of this paper motivate studying coverings and partitions
of trees by subtrees of bounded degree. We consider these purely
combinatorial problems in our companion paper
\citep{Wood-Coverings}. For example, given a tree $T$ and integer $d$,
we present there a formula for the minimum number of degree-$d$
subtrees that partition $T$, and describe a polynomial time algorithm
that finds such a partition. Similarly, we present a polynomial time
algorithm that finds a covering of $T$ by the minimum number of
degree-$d$ subtrees.


\def\cprime{$'$} \def\soft#1{\leavevmode\setbox0=\hbox{h}\dimen7=\ht0\advance
  \dimen7 by-1ex\relax\if t#1\relax\rlap{\raise.6\dimen7
  \hbox{\kern.3ex\char'47}}#1\relax\else\if T#1\relax
  \rlap{\raise.5\dimen7\hbox{\kern1.3ex\char'47}}#1\relax \else\if
  d#1\relax\rlap{\raise.5\dimen7\hbox{\kern.9ex \char'47}}#1\relax\else\if
  D#1\relax\rlap{\raise.5\dimen7 \hbox{\kern1.4ex\char'47}}#1\relax\else\if
  l#1\relax \rlap{\raise.5\dimen7\hbox{\kern.4ex\char'47}}#1\relax \else\if
  L#1\relax\rlap{\raise.5\dimen7\hbox{\kern.7ex
  \char'47}}#1\relax\else\message{accent \string\soft \space #1 not
  defined!}#1\relax\fi\fi\fi\fi\fi\fi} \def\Dbar{\leavevmode\lower.6ex\hbox to
  0pt{\hskip-.23ex\accent"16\hss}D}

\end{document}